\pretocmd{\maketitle}{%
  \markboth{The definitive version was published in IEEE Transactions on Vehicular Technology, vol. 74, no. 4, pp. 5504-5516, April 2025, doi: 10.1109/TVT.2024.3509613.}{}%
}{}{}
\newtheorem{theorem}{Theorem}
\newtheorem{proof}{\it Proof}
\newtheorem{remark}{Remark}
\begin{document}
\title{An ADMM-Based Geometric Configuration Optimization in  RSSD-Based Source Localization By UAVs with Spread Angle Constraint}

\author{Xin Cheng, Guangjie Han, \IEEEmembership{Fellow,~IEEE}, Jinlin Peng, Jinfang Jiang, Yu He, Weiqiang Zhu, \\
   Feng Shu, and Jiangzhou Wang, \IEEEmembership{Fellow,~IEEE}

\thanks{
Xin Cheng, Guangjie Han, Jinfang Jiang, and Yu He are with the College of Information Science and Engineering, Hohai University, Changzhou 213200, China. (e-mail: xincstar23@163.com).}
\thanks{
Jinlin Peng is with the Artificial Intelligence Research Center, National Innovation Institute of Defense Technology, Beijing 100071, China.}

\thanks{
Weiqiang Zhu is with the 8511 Research Institute, China Aerospace
Science and Industry Corporation, Nanjing 210007, China.}
\thanks{
Feng Shu is with the School of Information and Communication Engineering,
Hainan University, Haikou 570228, China.}
\thanks{
Jiangzhou Wang is with the School of Engineering, University of Kent, Canterbury CT2 7NT, U.K.}}

\maketitle

\begin{abstract}
  Deploying multiple unmanned aerial vehicles (UAVs) to locate a signal-emitting source covers a wide range of military and civilian applications like rescue and target tracking. It is well known that the UAVs-source (sensors-target) geometry, namely geometric configuration, significantly affects the final localization accuracy. This paper focuses on the geometric configuration optimization for received signal strength difference (RSSD)-based passive source localization by drone swarm. Different from prior works, this paper considers a general measuring condition where the spread angle of drone swarm centered on the source is constrained.
  Subject to this constraint, a geometric configuration optimization problem with the aim of maximizing the determinant of Fisher information matrix (FIM) is formulated. After transforming this problem using matrix theory, an alternating direction method of multipliers (ADMM)-based optimization framework is proposed. To solve the subproblems in this framework, two global optimal solutions based on the Von Neumann matrix trace inequality theorem and majorize-minimize (MM) algorithm are proposed respectively. Finally, the effectiveness as well as the practicality of the proposed ADMM-based optimization algorithm are demonstrated by extensive simulations.
\end{abstract}
\begin{IEEEkeywords}
Source localization, unmanned aerial vehicles, received signal strength difference, geometric configuration optimization, Fisher information matrix, alternating direction method of multipliers, region constraint.
\end{IEEEkeywords}

\IEEEpeerreviewmaketitle

\section{Introduction}
Passive source localization has been applied successfully in many  applications including disaster rescue, mobile communication and electronic warfare \cite{ho2008passive,dogancay2012uav,win2018theoretical}. Recently, with the fast development of unmanned aerial vehicle (UAV) technology, source localization using UAVs has been popular and widely researched \cite{kim2013uav,jiang2020localization,wang2018performance,cheng2021communication}. Compared to traditional sensor carriers like ground vehicles, UAV has the features of small size, fast speed and rapid deployment, thus plays an important role in  arduous localization tasks.

Considering the diversity of sensors, there are several types of localization systems, e.g., direction of arrival (DoA) \cite{shu2018low}, time of arrival (ToA) \cite{xu2011source}, time difference of arrival (TDoA) \cite{ho1993solution}, received signal strength (RSS) \cite{weiss2003accuracy} and received signal strength difference (RSSD) \cite{lohrasbipeydeh2015unknown}. Among these types, the RSS/RSSD-based approach is less sensitive to multi-path interference, thus more suitable for complex localization environment\cite{1458287,zekavat2019handbook}.  Moreover, the RSS/RSSD-based approach is cost-effective and easy-operating since it does not require tight synchronization or calibration like others \cite{patwari2005locating}. Although both measure the signal strength, unlike the RSS-based approach, the RSSD-based approach does not require knowing the transmit power of signal-emitting source \cite{lohrasbipeydeh2014minimax}. Therefore, in some scenarios where the source belongs to a third party or enemy, only the RSSD-based approach is applicable among the two approaches. Due to these advantages, the RSSD-based source localization system by drone swarm is considered in this paper.

A RSSD-based source localization consists of  two stages, i.e., measurement and estimation. Extensive researches focus on designing high-precision estimation method. However, as the prerequisite of estimation, measurement also plays an important role in the final localization precision. How to make measurement becomes a worth-exploring problem. It is well known that for any unbiased estimate, the lower bound of its error can be represented by the Cram\'{e}r-Rao lower bound (CRLB) \cite{kay1993fundamentals,ucinski2004optimal}. And the CRLB is composed of the measurement condition and the geometric configuration. Specifically, the measurement condition includes the wireless propagation environment and the measurement capability of sensor cluster, and the geometric configuration refers to the measuring positional relationship between the sensor cluster and the source. Therefore, in the RSSD-based source localization system by drone swarm, the geometric configuration, namely measuring positions of UAVs, deserves to be optimized \cite{wang2018performance,cheng2022optimal}.

In  decades,  the geometric configuration optimization of source localization has been researched mainly following three optimization criterions, in which CRLB and Fisher information matrix (FIM) are used as the evaluation standards. These optimization criterions include D-optimality criterion (maximizing the determinant of FIM), A-optimality criterion (minimizing the trace of CRLB) and E-optimality criterion (minimizing the maximum eigenvalue of CRLB). Based on the above criterions, the geometric configuration of RSS-based localization system was optimized in \cite{zhao2013optimal,xu2019optimal,sahu2022optimal}.  A closed-form optimal geometric configuration for equal-weight cases where measurement capabilities of different sensor are identical was presented in \cite{zhao2013optimal}.
Besides, several numerical optimization algorithms were proposed  based on frame theory \cite{zhao2013optimal}, resistive network method \cite{xu2019optimal} and ADMM \cite{sahu2022optimal}. Compared to the RSS-based localization, the geometric configuration optimization of RSSD-based localization is more complex due to the unknown transmit power. And it has not been investigated until recent years \cite{li2022optimal,heydari2020optimal,heydari2020sensor}. In \cite{li2022optimal}, a closed-form optimal geometric configuration was derived for the scenario where three sensors were deployed. In \cite{heydari2020optimal}, only  the equal-weight scenario was considered, the theoretical analysis showed that when the geometric configuration reaches the optimum, the sensors are uniformly distributed on a circle centred on the source. In \cite{heydari2020sensor}, two suboptimal geometric configuration optimization algorithms were proposed, one was based on gradient descent and the other was based on sequential optimization.

It is worth mentioning that works in \cite{li2022optimal,heydari2020optimal,heydari2020sensor}  have not considered the measuring region constraint. However, due to limitations such as terrain, detection of enemy and communication range, the measuring positions of sensors are often constrained in a practical localization system by UAVs. Certainly, this constraint makes the optimization problem more cumbersome. In recent years, scholars have studied this kind of optimization problem with different region constraints for several types of localization. In \cite{cheng2022optimal}, the geometric configuration optimization problem with  distance constraints between sensors and the source for RSS-based localization was analyzed, and several numerical algorithms based on frame theory were proposed to find the optimal solutions. In \cite{sadeghi2020optimal},  any closed measuring region away from the source was considered, and an analytical geometric configuration for AoA localization was derived. For TDoA localization with any measuring region with spread angle less than $\frac{\pi}{2}$, an analytical geometric configuration was derived in \cite{sadeghi2021optimal}. In \cite{liang2016constrained}, the geometric configuration optimization problem for mixed range/angle/RSS localization was studied while the measuring region was a bow-shaped arch centered on the source, the max values of objective function versus the spread angle of this arch were given out. Moreover it was concluded that the best measurement is reachable when the spread angle is greater than $\frac{3\pi}{4}$.  However, it is hard to extend these works in  \cite{cheng2022optimal,sadeghi2020optimal,sadeghi2021optimal,liang2016constrained} to the RSSD-based localization system.

To overcome the shortcoming of existing works, in this paper, the geometric configuration optimization of RSSD-based source localization system by drone swarm is studied, and a general constrained measuring region is considered. Specifically, the measuring positions of UAVs are optimized with the spread angle constraint of drone swarm with respect to the source. The major contributions of this paper can be summarized as follows:
\begin{enumerate}
\item For the RSSD-based source localization system by drone swarm, considering constrained spread angle of drone swarm, the geometric configuration optimization problem is formulated following the D-optimization criterion. Then the multi-scalar optimization problem is equivalently transformed into a multi-vector optimization problem  and a single-matrix optimization problem. These derived problem forms provide valuable insight for the constrained geometric configuration optimization.
\item An ADMM-based constrained geometric configuration optimization algorithm is developed. Specifically, the ADMM framework is introduced to solve the derived single-matrix optimization problem. To solve the two subproblems in ADMM iterations,  a closed-form optimal solution and a MM-based optimal algorithm are proposed respectively. Although the proposed algorithm is designed for the RSSD-based localization, it can be also extended easily to other types of localization including RSS, DoA and TDoA.
\item Extensive simulations are conducted for various testing scenarios. The simulation results verify the effectiveness of the proposed algorithm on improving the upper limit of localization precision.  Furthermore, considering an initial estimation error in practical scenarios, the practicality of the proposed algorithm on refining the initial estimation is demonstrated.
\end{enumerate}

The rest of this paper is organized as follows. The measuring model is presented and the constrained geometric configuration optimization problem is formulated in Section II. In Section III, the optimization problem is transformed into a vector form and a matrix form. Then an ADMM-based algorithm is proposed in Section IV. In Section V, numerical results are presented. Finally, conclusions are drawn in Section VI.

\emph{Notations:} Boldface lower case and boldface upper case letters denote vectors and matrices, respectively. Sign $(\cdot)^{T}$  denotes the transpose operation. $|| \cdot ||$ denotes the Frobenius norm of a matrix or the L2 norm of a vector. $\mathbb{E}\{\cdot\}$ represents the expectation operation. $\mathrm{Tr}(\cdot)$ represents the trace of a matrix. $| \cdot |$ represents the determinant of a matrix. $\mathrm{Re}$ represents the real part of a number.

\section{Measurement Model and Problem Form}
\begin{figure}[!ht]
  \centering
  \includegraphics[width=0.45\textwidth]{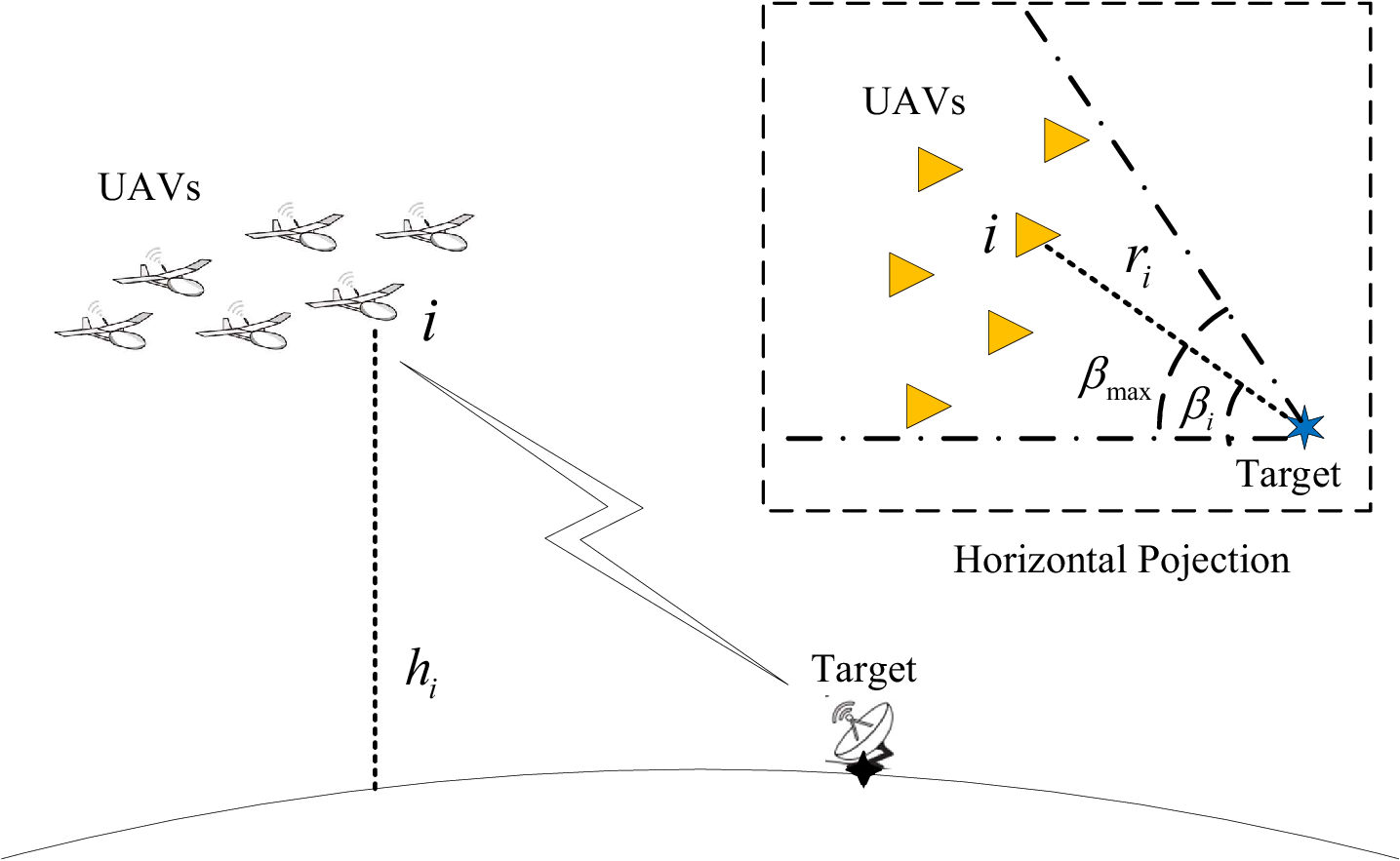}\\
  \caption{RSSD-based Source Localization via UAVs with Spread Angle Constraint.}\label{sys}
\end{figure}

\subsection{Measurement Model}
Consider a RSSD-based source localization system by UAVs as shown in Fig.~\ref{sys}. The source is located on the ground and emits wireless signal with unknown transmit power. In the localization task, three phases are involved. Initially, UAVs hovering at different positions receive wireless signals from the source and measure their strength using onboard sensors. Subsequently, each UAV transmits its locally stored RSS data, along with its measuring position, to either a central UAV within the swarm or to a ground control station. It is important to note that, due to the global positioning system (GPS) equipped on each UAV, the measuring position is accurately known. Finally, based on the aggregated data from the UAVs, a position estimation algorithm is applied to determine the target's location at the central UAV or ground control station.

Assume $N$ UAVs are employed. The position of target is expressed as $\mathbf{s}=\begin{bmatrix} x & y & 0 \end{bmatrix}$.
In the measurement phase, the position of the $i$-th UAV is expressed as $\mathbf{u}_{i}=\begin{bmatrix} x_{i} & y_{i}  & z_{i} \end{bmatrix}$, and the distance between  the $i$-th UAV and the target, denoted as $d_{i}$, is given by
\begin{align}
d_{i}=\sqrt{(x-x_{i})^2+(y-y_{i})^2+z_{i}^2},~i=1,2,\cdots,N.
\end{align}
According to the traditional path loss model of wireless  propagation (as shown in \cite{weiss2003accuracy}), the RSS of the $i$-th UAV is expressed as
\begin{align}\label{RSS}
P_{i}=\underbrace{P_{0}-10\gamma\lg(d_{i})}_{f_{i}(\bm{\theta})}+\eta_{i},~\forall i,
\end{align}
where $P_{0}$ represents the referenced power (in dB form) of source, $\gamma$ is the path loss exponent (PLE),  $\eta_{i}$ represents the measurement noise following the zero-mean Gaussian distribution with standard deviation $\sigma_{i}$, and $\bm{\theta}\triangleq\begin{bmatrix} P_{0} & \mathbf{s}^{T} \end{bmatrix}^{T}$. The term $\bm{\theta}$ is used to describe all unknown variables in this measurement.

Based on the Eq. (\ref{RSS}),  all measurements by drone swarm can be represented in a vector form as follows.
\begin{align}\label{RSS_vector}
\mathbf{P}=\mathbf{f}(\bm{\theta})+\bm{\eta},
\end{align}
where $\mathbf{P}\in \mathbb{R}^{N \times1}$ is the vector form of $\{P_{i}\}_{i=1}^{N}$, $\mathbf{f}(\mathbf{s})\in \mathbb{R}^{N\times1}$ is the vector form of  $\{f_{i}\}_{i=1}^{N}$, and  $\bm{\eta}\in \mathbb{R}^{N \times1}$ is the vector form of the $\{\eta_{i}\}_{i=1}^{N}$. Accordingly, the probability density function (PDF) of $\bm{\theta}$ is given by
\begin{align}\label{7BigML}
Q(\bm{\theta})= \frac{1}{(2\pi)^{\frac{N}{2}}|\mathbf{N}|^{\frac{1}{2}}}\exp\{-\frac{1}{2}(\mathbf{r}-\mathbf{f}(\bm{\theta}))^{T}\mathbf{N}^{-1}(\mathbf{r}-\mathbf{f}(\bm{\theta}))\}.
\end{align}

\subsection{Problem of Geometric Configuration Optimization}
In the estimation phase, following maximum likelihood estimate (MLE), numerical methods can be used to infer the position of source based on the Eq. (\ref{7BigML}). We denote an unbiased estimation by $\bm{\hat{\theta}}$, the covariance matrix of its error satisfies the following well-known inequality \cite{kay1993fundamentals}.
\begin{align}\label{errorandcandf}
\mathbb{E}\left\{(\bm{\hat{\theta}}-\bm{\theta}) (\bm{\hat{\theta}}-\bm{\theta})^{T}\right\}\succeq \mathbf{C}=\mathbf{F}^{-1}.
\end{align}
where $\mathbf{C}$ is the CRLB matrix, and $\mathbf{F}$ is the FIM \cite{kay1993fundamentals} given by
\begin{align}\label{FIM0}
\mathbf{F}=\mathbb{E}_{\bm{\eta}}\{\frac{\partial\ln{Q(\bm{\theta})}}{{\partial\bm{\theta}}} \frac{\partial^{T}\ln{Q(\bm{\theta})}}{{\partial\bm{\theta}}}\}.
\end{align}

Eq. (\ref{errorandcandf}) implies that $\mathbf{C}$ and $\mathbf{F}$ can be used to guide the measurement of drone swarm.
It is well known that the measuring positions of UAVs with respect to the source, namely geometric configuration, are important components of $\mathbf{C}$ or $\mathbf{F}$ \cite{dogancay2012uav,cheng2022optimal}. Therefore these measuring positions are optimized in this paper accordingly to improve the measurement quality as well as the final estimation accuracy. Moreover, the D-optimization criterion, i.e., maximizing the determinant of FIM, is applied to guide this optimization. Note that maximizing the determinant of FIM  is equivalent to maximizing the amount of information contained in the noisy measurements and minimizing the volume of the uncertainty ellipsoid \cite{ucinski2004optimal,liang2016constrained}.

To make the geometric configuration optimization concise, we reexpress the measuring positions of UAVs with respect to the source. The horizontal distance between the $i$-th UAV and the source is denoted as $r_{i}$. The vertical distance between the $i$-th UAV and the source, which is also the flying hight of the $i$-th UAV, is denoted as $h_{i}$. The horizontal angle of the $i$-th UAV with respect to the source is denoted as $\beta_{i}$, i.e., $\tan(\beta_{i})=\frac{x_{i}-x}{y_{i}-y}$. Note that in the three-dimensional space, $\mathbf{u}_{i}$ is determined uniquely by $r_{i}$, $h_{i}$ and $\beta_{i}$. Assume the distances between each UAV and the source are fixed \footnote{The optimization of distances between each UAV and the source is straightforward, and  independent of the optimization of angles  \cite{cheng2022optimal}.}. But the horizontal angles of UAVs are variables to be optimized. Moreover, as shown in Fig.~\ref{sys}, a general constraint is considered where the spread horizontal angle of drone swarm centered on the source is limited, i.e.,  $0\leq\beta_{i}\leq \beta_{max},~\forall i$.

Following the D-optimization criterion, the geometric configuration optimization problem for the RSSD-based localization system by drone swarm under the above measuring constraint  is formulated as
\begin{align}\label{pl}
\mathrm{(P1)}:&\max_{\{\beta_{i}\}_{i=1}^{N}}~~~~~~~~~~|\mathbf{F}|    \nonumber\\
&\text{s. t.}~~~~~0\leq\beta_{i}\leq \beta_{max},~\forall i.
\end{align}


\section{Problem Transformation and Analysis}\label{SeIII}
In this section, the specific expression of FIM is derived. Then the problem $\mathrm{(P1)}$ is transformed into a multi-vector optimization problem by replacing the horizontal angle with a unit directional vector. Based on this, a theorem about equivalent geometric configurations is presented.  Finally, the problem is  transformed into a concise single-matrix optimization problem by improving the dimension of vectorial variables.

From Eq. (\ref{FIM0}), the elements of FIM are given by
\begin{align}
\mathbf{F}_{i,j}=\mathbb{E}_{\bm{\eta}}\{\frac{\partial\ln{Q(\bm{\theta})}}{{\partial\bm{\theta}_{i}}} \frac{\partial^{T}\ln{Q(\bm{\theta})}}{{\partial\bm{\theta}_{j}}}\},~~~i\leq3,~~j\leq3.
\end{align}
Thus, we get
\begin{subequations}
\begin{align}
&\mathbf{F}_{1,1}=\mathbb{E}_{\bm{\eta}}\{\frac{\partial\ln{Q(\bm{\theta})}}{\partial P_{0}} \frac{\partial^{T}\ln{Q(\bm{\theta})}}{\partial P_{0}}\}=\mathbf{J}^{T}\mathbf{N}^{-1}\mathbf{J},\\
&\mathbf{F}_{2,2}=\mathbb{E}_{\bm{\eta}}\{\frac{\partial\ln{Q(\bm{\theta})}}{\partial x} \frac{\partial^{T}\ln{Q(\bm{\theta})}}{\partial x}\}=\mathbf{a}_{x}^{T}\mathbf{N}^{-1}\mathbf{a}_{x}, \\
&\mathbf{F}_{3,3}=\mathbb{E}_{\bm{\eta}}\{\frac{\partial\ln{Q(\bm{\theta})}}{\partial y} \frac{\partial^{T}\ln{Q(\bm{\theta})}}{\partial y}\}=\mathbf{a}_{y}^{T}\mathbf{N}^{-1}\mathbf{a}_{y},\\
&\mathbf{F}_{1,2}=\mathbf{F}_{2,1}=\mathbb{E}_{\bm{\eta}}\{\frac{\partial\ln{Q(\bm{\theta})}}{\partial P_{0}} \frac{\partial^{T}\ln{Q(\bm{\theta})}}{\partial x}\}=\mathbf{J}^{T}\mathbf{N}^{-1}\mathbf{a}_{x},\\
&\mathbf{F}_{1,3}=\mathbf{F}_{3,1}=\mathbb{E}_{\bm{\eta}}\{\frac{\partial\ln{Q(\bm{\theta})}}{\partial P_{0}} \frac{\partial^{T}\ln{Q(\bm{\theta})}}{\partial y}\}=\mathbf{J}^{T}\mathbf{N}^{-1}\mathbf{a}_{y},\\
&\mathbf{F}_{2,3}=\mathbf{F}_{3,2}=\mathbb{E}_{\bm{\eta}}\{\frac{\partial\ln{Q(\bm{\theta})}}{\partial x} \frac{\partial^{T}\ln{Q(\bm{\theta})}}{\partial y}\}=\mathbf{a}_{x}^{T}\mathbf{N}^{-1}\mathbf{a}_{y},
\end{align}
\end{subequations}
where
\begin{subequations}
\begin{align}
&\mathbf{J}=\left[1,1,\cdots,1  \right], \\
&\mathbf{a}_{x}= \begin{bmatrix}  \frac{10\gamma}{\ln{10}} \frac{x_{1}-x}{d_{1}^2} &\frac{10\gamma}{\ln{10}} \frac{x_{2}-x}{d_{2}^2} & \cdots & \frac{10\gamma}{\ln{10}} \frac{x_{N}-x}{d_{N}^2}\end{bmatrix},   \\
&\mathbf{a}_{y}=\begin{bmatrix} \frac{10\gamma}{\ln{10}} \frac{y_{1}-y}{d_{1}^2} &\frac{10\gamma}{\ln{10}} \frac{y_{2}-y}{d_{2}^2} &\cdots& \frac{10\gamma}{\ln{10}} \frac{y_{N}-y}{d_{N}^2}\end{bmatrix}.
\end{align}
\end{subequations}

Thus, the FIM can be partitioned into four submatrices as
\begin{subequations}\label{fim1}
\begin{align}
&\mathbf{F}=\begin{bmatrix} a  &  \mathbf{b}^{T}  \\ \mathbf{b} & \mathbf{H}  \end{bmatrix},  \\
&a=\mathbf{J}^{T}\mathbf{N}^{-1}\mathbf{J},  \\
&\mathbf{b}=\begin{bmatrix}\mathbf{J}^{T}\mathbf{N}^{-1}\mathbf{a}_{x} & \mathbf{J}^{T}\mathbf{N}^{-1}\mathbf{a}_{y} \end{bmatrix},   \\
&\mathbf{H}=\begin{bmatrix} \mathbf{a}_{x}^{T}\mathbf{N}^{-1}\mathbf{a}_{x}  &  \mathbf{a}_{x}^{T}\mathbf{N}^{-1}\mathbf{a}_{y}  \\ \mathbf{a}_{x}^{T}\mathbf{N}^{-1}\mathbf{a}_{y}  & \mathbf{a}_{y}^{T}\mathbf{N}^{-1}\mathbf{a}_{y} \end{bmatrix}.
\end{align}
\end{subequations}

To make the expression of FIM more intuitive and analyzable, we define the following symbols.
\begin{subequations}\label{suofang}
\begin{align}
\mathbf{W}\triangleq \mathrm{tr}^{-1}(\mathbf{N}^{-1})\mathbf{N} ^{-1}, \\
\overline{\sigma^{-2}}\triangleq \frac{1}{N}\sum_{i=1}^{N} \sigma_{i}^{-2}.
\end{align}
\end{subequations}
Based on Eq. (\ref{fim1}) and Eq. (\ref{suofang}), the objective function of problem $\mathrm{(P1)}$, i.e., the determinant of FIM is given by
\begin{align}\label{fimhanglie0}
|\mathbf{F}|=|a| |\mathbf{H}- \mathbf{b} a^{-1} \mathbf{b}^{T}|=N^2\overline{\sigma^{-2}}^{2}|\mathbf{T}^{'}|,
\end{align}
where,
\begin{align}\label{expressT}
&\mathbf{T}^{'}\nonumber\\
&=\begin{bmatrix} \mathbf{a}_{x}^{T}\mathbf{W}\mathbf{a}_{x}-(\mathbf{a}_{x}^{T}\mathbf{W}\mathbf{J})^{2}&  \mathbf{a}_{x}^{T}\mathbf{W}\mathbf{a}_{y}- \mathbf{a}_{x}^{T}\mathbf{W} \mathbf{J}\mathbf{a}_{y}^{T}\mathbf{W}\mathbf{J} \\ \mathbf{a}_{x}^{T}\mathbf{W}\mathbf{a}_{y}- \mathbf{a}_{x}^{T}\mathbf{W} \mathbf{J}\mathbf{a}_{y}^{T}\mathbf{W}\mathbf{J} &  \mathbf{a}_{y}^{T}\mathbf{W}\mathbf{a}_{y}- (\mathbf{a}_{y}^{T}\mathbf{W}\mathbf{J})^2  \end{bmatrix}\nonumber\\
&=\begin{bmatrix} \mathbf{a}_{x}^{T}(\mathbf{W}-\mathbf{W}\mathbf{J}(\mathbf{W}\mathbf{J})^{T})\mathbf{a}_{x}&  \mathbf{a}_{x}^{T}(\mathbf{W}-\mathbf{W}\mathbf{J}(\mathbf{W}\mathbf{J})^{T})\mathbf{a}_{y}
\\ \mathbf{a}_{x}^{T}(\mathbf{W}-\mathbf{W}\mathbf{J}(\mathbf{W}\mathbf{J})^{T})\mathbf{a}_{y}&  \mathbf{a}_{y}^{T}(\mathbf{W}-\mathbf{W}\mathbf{J}(\mathbf{W}\mathbf{J})^{T})\mathbf{a}_{y} \end{bmatrix}\nonumber\\
&=\begin{bmatrix} \mathbf{a}_{x}^{T} \\ \mathbf{a}_{y}^{T} \end{bmatrix} (\mathbf{W}-\mathbf{W}\mathbf{J}(\mathbf{W}\mathbf{J})^{T})\begin{bmatrix} \mathbf{a}_{x} & \mathbf{a}_{y}\end{bmatrix}.
\end{align}

Considering the relative position relationship between UAVs and the source as described in Section II, we have
\begin{align}\label{defg}
\begin{bmatrix} \mathbf{a}_{x}^{T} \\ \mathbf{a}_{y}^{T} \end{bmatrix}_{i}
&=\begin{bmatrix}  \frac{10\gamma}{\ln{10}} \frac{x_{i}-x}{d_{i}^2} &  \frac{10\gamma}{\ln{10}} \frac{y_{i}-y}{d_{i}^2} \end{bmatrix}\nonumber\\
&=\begin{bmatrix}  \frac{10\gamma}{\ln{10}} \frac{r_{i}\cos{\beta_{i}}}{d_{i}^2} &  \frac{10\gamma}{\ln{10}} \frac{r_{i}\sin{\beta_{i}}}{d_{i}^2} \end{bmatrix} \nonumber\\
&=\frac{10\gamma}{\ln{10}} \frac{r_{i}}{d_{i}^2} \begin{bmatrix}  \cos{\beta_{i}} & \sin{\beta_{i}}\end{bmatrix}=\frac{10\gamma}{\ln{10}} \frac{r_{i}}{d_{i}^2}\mathbf{g}_{i},~\forall i,
\end{align}
where $\begin{bmatrix} \mathbf{a}_{x}^{T} \\ \mathbf{a}_{y}^{T} \end{bmatrix}_{i}$ is the $i$-th column of $\begin{bmatrix} \mathbf{a}_{x}^{T} \\ \mathbf{a}_{y}^{T} \end{bmatrix}$, and $\mathbf{g}_{i}\triangleq\begin{bmatrix} \cos{\beta_{i}} & \sin{\beta_{i}}\end{bmatrix}$.
This defined vector is a unit directional vector, and accounts for the influencing factor to the measuring information amount caused by the horizontal angle of the $i$-th UAV. Moreover, it is easy to verify that there exists a one-one onto mapping between $\mathbf{g}_{i}$ and  $\beta_{i}$, which indicates that it is reasonable to replace $\{\beta_{i}\}_{i=1}^{N}$  with $\{\mathbf{g}_{i}\}_{i=1}^{N}$ as the optimization variables in problem $\mathrm{(P1)}$.

Substituting  (\ref{defg}) into (\ref{expressT}), we have
\begin{align}\label{deT}
\mathbf{T}^{'}=&\left(\frac{10\gamma}{\ln{10}}\right)^2\nonumber\\
&\left(\underbrace{\sum_{i=1}^{N}w_{i}\frac{r_{i}^2}{d_{i}^4} \mathbf{g}_{i}\mathbf{g}_{i}^{T}-
(\sum_{i=1}^{N}w_{i}\frac{r_{i}}{d_{i}^2} \mathbf{g}_{i}) (\sum_{i=1}^{N}w_{i}\frac{r_{i}}{d_{i}^2} \mathbf{g}_{i})^{T}}_{\mathbf{T}}\right).
\end{align}
And the determinant of FIM becomes
\begin{align}\label{fimhanglie0}
|\mathbf{F}|=\left(\frac{10\gamma}{\ln{10}}\right)^4N^2\overline{\sigma^{-2}}^{2}|\mathbf{T}|.
\end{align}

From above, the objective function of problem $\mathrm{(P1)}$ can be replaced by $|\mathbf{T}|$. Before transforming the problem $\mathrm{(P1)}$, we give a theorem about the geometric configuration by exploiting the properties of $|\mathbf{T}|$ as follows.
\begin{theorem}\label{thm1}
Given arbitrary but fixed coefficients $\{r_{i}\}_{i=1}^{N}$ and $\{h_{i}\}_{i=1}^{N}$, if two geometric configurations can convert to each other only by  horizontal overall rotation or horizontal overall reflection with respect to the source, then the corresponding determinants of FIM are equal, namely the two geometric configurations are equivalent.
\end{theorem}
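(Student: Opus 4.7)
The plan is to reduce the statement to an invariance property of the $2\times 2$ matrix $\mathbf{T}$ defined in Eq. (\ref{deT}), and then exploit the fact that horizontal rotation and reflection act on the unit directional vectors $\mathbf{g}_i$ by a single orthogonal matrix. By Eq. (\ref{fimhanglie0}), $|\mathbf{F}|$ differs from $|\mathbf{T}|$ only by factors that depend on $\gamma$, $N$ and $\overline{\sigma^{-2}}$, none of which involve the horizontal angles. Thus proving that $|\mathbf{T}|$ is invariant under horizontal rotation/reflection about the source suffices.

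First I would check which quantities appearing in $\mathbf{T}$ are untouched by the transformation. Since $r_i$ is the horizontal distance to the source and $h_i$ is the flying height, a horizontal rotation or reflection centered at the source preserves both; consequently $d_i=\sqrt{r_i^2+h_i^2}$ is preserved, and so are the weights $w_i$ (which depend only on $\sigma_i$, a property of the sensor, not of its angular placement). The only object that changes is $\mathbf{g}_i=[\cos\beta_i,\sin\beta_i]^T$: a horizontal overall rotation by angle $\alpha$ sends $\mathbf{g}_i\mapsto \mathbf{R}\mathbf{g}_i$ for the $2\times 2$ rotation matrix $\mathbf{R}$, and a horizontal overall reflection sends $\mathbf{g}_i\mapsto \mathbf{R}\mathbf{g}_i$ for some $2\times 2$ reflection matrix $\mathbf{R}$. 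In both cases $\mathbf{R}$ is orthogonal with $|\mathbf{R}|=\pm 1$, and it is the same $\mathbf{R}$ for every $i$.

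Next I would factor $\mathbf{R}$ out of both terms in $\mathbf{T}$. The rank-one outer products become $\mathbf{R}\mathbf{g}_i\mathbf{g}_i^T\mathbf{R}^T$, and since the scalar coefficients $w_ir_i^2/d_i^4$ and $w_ir_i/d_i^2$ are unaffected, the weighted sums pull $\mathbf{R}$ and $\mathbf{R}^T$ outside the sum. This yields $\mathbf{T}'=\mathbf{R}\mathbf{T}\mathbf{R}^T$ for the transformed configuration, hence $|\mathbf{T}'|=|\mathbf{R}|^2|\mathbf{T}|=|\mathbf{T}|$, and therefore $|\mathbf{F}'|=|\mathbf{F}|$ by Eq. (\ref{fimhanglie0}).

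I do not anticipate a serious obstacle here; the argument is purely a congruence-invariance computation. The only step that requires care is verifying that the second, cross-term in $\mathbf{T}$ also conjugates cleanly by $\mathbf{R}$, which follows immediately from the linearity of the inner sum $\sum_i w_i(r_i/d_i^2)\mathbf{g}_i$ in the $\mathbf{g}_i$'s together with the identity $(\mathbf{R}\mathbf{v})(\mathbf{R}\mathbf{v})^T=\mathbf{R}(\mathbf{v}\mathbf{v}^T)\mathbf{R}^T$. Once that is in place, the determinant identity is immediate and the theorem follows.
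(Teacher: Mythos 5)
Your proof is correct, and it rests on the same central observation as the paper's Appendix~A: an overall horizontal rotation or reflection acts on every $\mathbf{g}_i$ by one common orthogonal matrix, leaves $r_i$, $h_i$, $d_i$ and the weights untouched, and therefore conjugates $\mathbf{T}$ into $\mathbf{U}\mathbf{T}\mathbf{U}^{T}$. Where you diverge is in how you close the argument. You finish in one line with the multiplicativity of the determinant, $|\mathbf{U}\mathbf{T}\mathbf{U}^{T}|=|\mathbf{U}|^{2}|\mathbf{T}|=|\mathbf{T}|$. The paper instead first rewrites $|\mathbf{T}|$ for a symmetric $2\times 2$ matrix in terms of two orthogonally invariant scalars --- the Frobenius norm $\|\mathbf{T}\|^{2}$ and the mean eigenvalue $\overline{\lambda}$ (equivalently the trace, computed via $\sum_i w_i r_i^2/d_i^4-\|\sum_i w_i (r_i/d_i^2)\mathbf{g}_i\|^2$) --- and then verifies separately that each scalar is unchanged under the transformation. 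Both routes are valid; yours is shorter, requires no case-by-case invariance checks, and incidentally sidesteps a small algebraic slip in the paper's scalar identity (for a symmetric $2\times2$ matrix one has $|\mathbf{T}|=2\overline{\lambda}^{2}-\tfrac{1}{2}\|\mathbf{T}\|^{2}$, not $2\overline{\lambda}^{2}-\|\mathbf{T}\|^{2}$ as written in the appendix). The paper's decomposition does have the minor virtue of exhibiting which scalar summaries of the configuration the determinant actually depends on, but for the purpose of proving the theorem your congruence-plus-determinant argument is the cleaner one.
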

\begin{proof}
See Appendix \ref{appendixa}.
\end{proof}

Theorem \ref{thm1} gives equivalent transformations for an arbitrary geometric configuration. It shows that the solutions of problem    $\mathrm{(P1)}$ are not unique. And using these transformations, we can obtain other equivalent solutions from a known one.

To replace $\{\beta_{i}\}_{i=1}^{N}$  with $\{\mathbf{g}_{i}\}_{i=1}^{N}$ in problem $\mathrm{(P1)}$, the objective function can be replaced with $|\mathbf{T}|$ in Eq (\ref{deT}), but the constraints in  problem $\mathrm{(P1)}$ still need to be transformed. Based on Theorem \ref{thm1}, the transformation is given by the following theorem.

\begin{theorem}\label{7thm2}
In problem $\mathrm{(P1)}$, $0\leq\beta_{i}\leq \beta_{max}$ is equivalent to $\mathbf{g}_{i}\geq \mathbf{g}_{0}$, where
\begin{align}
 \mathbf{g}_{0}=\left\{
	\begin{aligned}
	\begin{bmatrix} \cos{\beta_{max}} & 0\end{bmatrix}, \quad \beta_{max}\leq \pi\\
	\begin{bmatrix} -1 & \cos{\frac{\beta_{max}}{2}}\end{bmatrix}, \quad \beta_{max}>\pi\\
	\end{aligned}
	\right
..
\end{align}
\end{theorem}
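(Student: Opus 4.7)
The plan is to split on whether $\beta_{max}\leq\pi$ or $\beta_{max}>\pi$ and interpret $\mathbf{g}_i\geq \mathbf{g}_0$ as a componentwise inequality. In both cases the goal is to show that the angles satisfying the new constraint form an arc of length $\beta_{max}$ on the unit circle, and that this arc either coincides with $[0,\beta_{max}]$ directly, or is congruent to it via a rigid horizontal rotation, so that Theorem \ref{thm1} yields equivalence of the two optimization problems.

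For the case $\beta_{max}\leq\pi$, I would argue directly. The constraint $\mathbf{g}_i=[\cos\beta_i,\sin\beta_i]\geq[\cos\beta_{max},0]$ unpacks as $\sin\beta_i\geq 0$ and $\cos\beta_i\geq \cos\beta_{max}$. The first inequality confines $\beta_i$ to $[0,\pi]$; on this interval $\cos$ is strictly decreasing, so the second inequality becomes $\beta_i\leq\beta_{max}$. Combined, the admissible set is exactly $[0,\beta_{max}]$, matching the original constraint of $\mathrm{(P1)}$ literally, so no appeal to Theorem \ref{thm1} is required here.

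For the case $\beta_{max}>\pi$, the argument is less direct because the two sets are not literally equal. I would first check that $\mathbf{g}_i\geq[-1,\cos(\beta_{max}/2)]$ reduces to the single condition $\sin\beta_i\geq \cos(\beta_{max}/2)$, since the first coordinate is trivial. Writing $c=\cos(\beta_{max}/2)\in[-1,0)$ and using $\arcsin(c)=\pi/2-\beta_{max}/2$, the two solutions of $\sin\beta=c$ in $[0,2\pi)$ are $\pi/2+\beta_{max}/2$ and $5\pi/2-\beta_{max}/2$, so $\sin\beta_i\geq c$ defines the contiguous arc of length $\beta_{max}$ centered on direction $\pi/2$ (the positive $y$-axis). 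The original constraint $[0,\beta_{max}]$ is an arc of the same length centered on direction $\beta_{max}/2$. Hence the new admissible arc is the rotation of the original one by angle $\pi/2-\beta_{max}/2$.

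At this point I would invoke Theorem \ref{thm1}: since an overall horizontal rotation of every UAV by the same angle leaves $|\mathbf{F}|$ unchanged, the problem $\mathrm{(P1)}$ with the original constraint and the problem with the new constraint have the same optimal value and their optimizers are in one-to-one correspondence under this rotation, which is precisely the sense of equivalence meant in the statement. The main obstacle is conceptual rather than technical: one has to recognize that in the $\beta_{max}>\pi$ branch the equivalence is not set-theoretic equality of feasible regions but equivalence of optimization problems modulo Theorem \ref{thm1}. Once that is accepted, the computation reduces to verifying the two arc lengths and their centers, together with a careful trigonometric identification of the boundary angles, and checking that the two case formulas for $\mathbf{g}_0$ agree at the boundary $\beta_{max}=\pi$ (both reduce to $[-1,0]$), so the piecewise definition is consistent.
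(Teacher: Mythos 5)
Your proposal is correct and follows essentially the same route as the paper: a direct set-equality argument for $\beta_{max}\leq\pi$, and for $\beta_{max}>\pi$ the observation that $\mathbf{g}_i\geq\mathbf{g}_0$ reduces to $\sin\beta_i\geq\cos(\beta_{max}/2)$, whose solution set is an arc of length $\beta_{max}$ that is carried onto $[0,\beta_{max}]$ by an overall rotation, so Theorem \ref{thm1} gives equivalence of the optimization problems. Your version is, if anything, slightly more explicit than the paper's about the rotation angle and the consistency of the two formulas at $\beta_{max}=\pi$.
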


\begin{proof}
Consider $\beta_{max}\leq \pi$ at first. It shows that $\cos{\beta_{i}}\geq\cos{\beta_{max}}$ means $\beta_{i}<\beta_{max} \cup \beta_{i}>2\pi-\beta_{max}$, and $\sin{\beta_{i}}\geq 0$ means $\beta_{i}\leq \pi$. Therefore, when $\mathbf{g}_{i}\geq \mathbf{g}_{0}$, $\beta_{i}$ satisfies $0\leq\beta_{i}\leq \beta_{max}$. Moreover, it is obvious that when $0\leq\beta_{i}\leq \beta_{max}$,  $\mathbf{g}_{i}$ satisfies $\mathbf{g}_{i}\geq \mathbf{g}_{0}$. From above, $0\leq\beta_{i}\leq \beta_{max}$ is equal to $\mathbf{g}_{i}\geq \mathbf{g}_{0}$ when $\beta_{max}\leq \pi$.

Then consider $\beta_{max}>\pi$. $\sin{\beta_{i}}\geq\cos{\frac{\beta_{max}}{2}}$ means $0\leq\beta_{i}\leq \frac{\pi}{2}+\frac{\beta_{max}}{2} \cup \frac{5\pi}{2}-\frac{\beta_{max}}{2}\leq\beta_{i}\leq 2\pi$.  According to Theorem \ref{thm1}, $0\leq\beta_{i}\leq \frac{\pi}{2}+\frac{\beta_{max}}{2} \cup \frac{5\pi}{2}-\frac{\beta_{max}}{2}\leq\beta_{i}\leq 2\pi$ is equivalent to $0\leq \beta_{i} \leq \beta_{max}$ in problem $\mathrm{(P1)}$. Therefore when $\mathbf{g}_{i}\geq \mathbf{g}_{0}$, $\beta_{i}$ satisfies $0\leq\beta_{i}\leq \beta_{max}$. Moreover, it is obvious that when $0\leq\beta_{i}\leq \beta_{max}$, $\mathbf{g}_{i}$ satisfies $\mathbf{g}_{i}\geq \mathbf{g}_{0}$. From above, $0\leq\beta_{i}\leq \beta_{max}$ is equal to $\mathbf{g}_{i}\geq \mathbf{g}_{0}$ when $\beta_{max}>\pi$.
\end{proof}

According to Eq (\ref{deT}), Eq (\ref{fimhanglie0}) and Theorem \ref{7thm2}, problem $\mathrm{(P1)}$ is equivalently transformed into
\begin{align}
\mathrm{(P2)}:&\max_{\{\mathbf{g}_{i}\}_{i=1}^{N}}~~~~~~~~~~ \left|\mathbf{T}\right| \nonumber\\
&~~\text{s. t.}~~~~~~~~~~\mathbf{g}_{i}^{T}\mathbf{g}_{i}=1, \nonumber\\
&~~~~~~~~~~~~~~\mathbf{g}_{i}\geq \mathbf{g}_{0},~\forall i.
\end{align}

Although problem $\mathrm{(P2)}$ is more concise than problem  $\mathrm{(P1)}$, it is still intractable due to the coupling of variables associated with different UAVs. Therefore, we continue transform problem $\mathrm{(P2)}$. At first, we define a new matrix that collects all variables in problem $\mathrm{(P2)}$ as follows.
\begin{align}
\mathbf{G}\triangleq\begin{bmatrix} \mathbf{g}_{1}^{T} &  \mathbf{g}_{2}^{T} &  \cdots & \mathbf{g}_{N}^{T} \end{bmatrix}.
\end{align}
Besides, we define the following matrixes.
\begin{align}
\mathbf{D}\triangleq\mathrm{diag}\left(\begin{bmatrix} \frac{r_{1}}{d_{1}^2} & \frac{r_{2}}{d_{2}^2}& \cdots & \frac{r_{N}}{d_{N}^2} \end{bmatrix}\right),
\end{align}
\begin{align}
\mathbf{B}\triangleq\mathbf{W}-\mathbf{W}\mathbf{J}(\mathbf{W}\mathbf{J})^{T}.
\end{align}
Based on these definitions, $\mathbf{T}$ in the objective function of problem $\mathrm{(P2)}$ is rewritten as
\begin{align}
\mathbf{T}
&=\left(\frac{\ln{10}}{10\gamma}\right)^{4}\nonumber\\
&~\begin{bmatrix} \mathbf{a}_{x}^{T}(\mathbf{W}-\mathbf{W}\mathbf{J}(\mathbf{W}\mathbf{J})^{T})\mathbf{a}_{x}&  \mathbf{a}_{x}^{T}(\mathbf{W}-\mathbf{W}\mathbf{J}(\mathbf{W}\mathbf{J})^{T})\mathbf{a}_{y}
\\ \mathbf{a}_{x}^{T}(\mathbf{W}-\mathbf{W}\mathbf{J}(\mathbf{W}\mathbf{J})^{T})\mathbf{a}_{y}&  \mathbf{a}_{y}^{T}(\mathbf{W}-\mathbf{W}\mathbf{J}(\mathbf{W}\mathbf{J})^{T})\mathbf{a}_{y} \end{bmatrix}\nonumber\\
&=\left(\frac{\ln{10}}{10\gamma}\right)^{4} \begin{bmatrix} \mathbf{a}_{x}^{T} \\ \mathbf{a}_{y}^{T} \end{bmatrix} \mathbf{B}\begin{bmatrix} \mathbf{a}_{x} & \mathbf{a}_{y}\end{bmatrix} \nonumber\\
&=\mathbf{G}^{T} \mathbf{D}^{T} \mathbf{B} \mathbf{D} \mathbf{G}.
\end{align}

For an arbitrary vector $\bm{\xi}$ in $\mathbb{R}^{N}$, the following formula is satisfied.
 \begin{align}\label{psidef}
\bm{\xi}^{T}\mathbf{B}\bm{\xi}&=\bm{\xi}^{T}\mathbf{W}\bm{\xi}-(\bm{\xi}^{T}\mathbf{W}\mathbf{J})^2\nonumber\\
 &=\sum_{i=1}^{N} w(i) \bm{\xi}^{2}(i)- (\sum_{i=1}^{K} w(i)\bm{\xi}(i))^2  \nonumber\\
 &=\sum_{i=1}^{N}w(i)g(\bm{\xi}(i))-g(\sum_{i=1}^{K}w(i)\bm{\xi}(i))\geq0,
\end{align}
where, $g(x)=x^2,~\forall x$, $w(i)$ is the $i$-th diagonal element of $\mathbf{W}$. It is easy to verify that $w(i)>0,~\forall i$ and $\sum_{i=1}^{N}w(i)=1$. The last inequality in formula (\ref{psidef}) results from Jensen's inequality \cite{boyd2004convex}. According to formula (\ref{psidef}), $\mathbf{B}$ is a semi-positive definite matrix, thus $\mathbf{B}$ is feasible for the cholesky decomposition, i.e., $\mathbf{B}=\mathbf{B}^{\frac{1}{2}}\mathbf{B}^{\frac{1}{2}T}$. Defining a matrix $\mathbf{X}$ that $\mathbf{X}\triangleq\mathbf{B}^{\frac{1}{2}}\mathbf{D} \mathbf{G}$ yields $\mathbf{T}=\mathbf{X}^{T} \mathbf{X}$. Finally, problem $\mathrm{(P2)}$ is transformed into the following problem.
\begin{align}
\mathrm{(P3)}:&\min_{\mathbf{G}}~~ \ln{\left|(\mathbf{X}^{T} \mathbf{X})^{-1}\right|}  \nonumber\\
&\text{s. t.}~~~\mathbf{X}=\mathbf{B}^{\frac{1}{2}}\mathbf{D} \mathbf{G}, \nonumber\\
&~~~~~~~~~~~\mathbf{G}\in \mathcal{D},
\end{align}
where $\mathcal{D}=\{\mathbf{G}|\mathbf{g}_{i}^{T}\mathbf{g}_{i}=1,~\mathbf{g}_{i}\geq \mathbf{g}_{0},~\forall i\}$.

\section{ADMM-Based Constrained Geometric Configuration Optimization Algorithm}

\subsection{Framework of ADMM}\label{seADMM}
To tackle problem $\mathrm{(P3)}$, ADMM framework is introduced in this section. Note that ADMM is  powerful and widely-used in the matrix optimization problem. It takes the form of a decomposition-coordination procedure, in which the solutions to small local subproblems are coordinated to find a solution to a large global problem \cite{boyd2011distributed}. And it blends the benefits of dual decomposition and augmented Lagrangian methods.

At first, we introduce Lagrangian multiplier denoted as $\mathbf{V}$ and augmented Lagrangian parameter denoted as $\rho$. Then the augmented Lagrangian of problem $\mathrm{(P3)}$ is expressed as
\begin{align}
L_{\rho}(\mathbf{X},\mathbf{G},\mathbf{V})=&\ln{\left|(\mathbf{X}^{T} \mathbf{X})^{-1}\right|}
+\mathrm{Tr}\left(\mathbf{V}^{T} (\mathbf{B}^{\frac{1}{2}}\mathbf{D} \mathbf{G}- \mathbf{X}) \right) \nonumber\\
&+\frac{\rho}{2}\|\mathbf{B}^{\frac{1}{2}}\mathbf{D} \mathbf{G}- \mathbf{X}\|^{2}.
\end{align}
According to \cite{boyd2011distributed}, the ADMM update framework to solve problem $\mathrm{(P3)}$ is given by
\begin{subnumcases} {\label{eqADMM}}
\mathbf{X}^{k+1}=\mathop{\mathrm{argmin}}_{\mathbf{X}} L_{\rho}(\mathbf{X},\mathbf{G}^{k},\mathbf{V}^{k}), \label{step1}\\
\mathbf{G}^{k+1}=\mathop{\mathrm{argmin}}_{\mathbf{G}\in \mathcal{D}}  L_{\rho}(\mathbf{X}^{k+1},\mathbf{G},\mathbf{V}^{k}), \label{step2} \\
\mathbf{V}^{k+1}=\mathbf{V}^{k}+\rho(\mathbf{B}^{\frac{1}{2}}\mathbf{D} \mathbf{G}^{k+1}-\mathbf{X}^{k+1}), \label{step3}
\end{subnumcases}
where $k$ is the iteration  number. In this framework, the update of $\mathbf{V}$ is straightforward, but the updates of $\mathbf{X}$ and $\mathbf{G}$ are not. We will given these updates by finding the global optimal solutions of  subproblem (\ref{step1}) and  subproblem (\ref{step2}) in the following subsections.

\subsection{Update of $\mathbf{X}$}\label{7sematrix}
At the $(k+1)$-th iteration, when $\mathbf{G}^{k}$ and $\mathbf{V}^{k}$ are given,  $L_{\rho}(\mathbf{X},\mathbf{G}^{k},\mathbf{V}^{k})$
(simplified as $L_{\rho}^{k}$ in this subsection) in subproblem (\ref{step1})  can be written as
\begin{align}\label{subpx}
&L_{\rho}^{k}=\ln{\left|(\mathbf{X}^{T} \mathbf{X})^{-1}\right|}+\frac{\rho}{2}\mathrm{Tr}(\mathbf{X}^{T} \mathbf{X})-\mathrm{Tr}\left(\mathbf{J}^{k T}\mathbf{X}\right)+\beta^{k},
\end{align}
where $\mathbf{J}^{k}=\mathbf{V}^{k}+\rho\mathbf{B}^{\frac{1}{2}}\mathbf{D} \mathbf{G}^{k}$, $\beta^{k}=\mathbf{Tr}(\mathbf{V}^{kT}\mathbf{B}^{\frac{1}{2}}\mathbf{D} \mathbf{G}^{k})+\frac{\rho}{2}\mathrm{Tr}\left((\mathbf{B}^{\frac{1}{2}}\mathbf{D} \mathbf{G}^{k})^{T} (\mathbf{B}^{\frac{1}{2}}\mathbf{D} \mathbf{G}^{k})\right)$.

To address subproblem (\ref{step1}), singular value decomposition (SVD) is applied. Since $\mathbf{X}\in\mathbb{R}^{N\times2}$ and $\mathbf{J}^{k}\in\mathbb{R}^{N\times2}$, the SVDs are expressed as
\begin{align}\label{svd1}
\mathbf{X}=\mathbf{U} \begin{bmatrix} \mathbf{\Lambda} & \mathbf{0} \end{bmatrix} \mathbf{V},
\end{align}
\begin{align}\label{svd2}
\mathbf{J}^{k}=\mathbf{U}_{J}^{k}\begin{bmatrix} \mathbf{\Sigma}^{k} & \mathbf{0}
\end{bmatrix} \mathbf{V}_{J}^{k},
\end{align}
where $\mathbf{0}\in \mathbb{R}^{(N-2)\times 2}$, $\mathbf{\Lambda}=\mathrm{diag}\left(  \begin{bmatrix} \tau_{1} & \tau_{2} \end{bmatrix}\right)$ with $\tau_{1}\geq \tau_{2}\geq0$ and $\mathbf{\Sigma}^{k}=\mathrm{diag}\left( \begin{bmatrix} \sigma^{k}_{1} & \sigma^{k}_{2} \end{bmatrix}\right)$ with $\sigma^{k}_{1}\geq \sigma^{k}_{2}\geq0$. Note that $\{\tau_{i}\}_{i=1}^{2}$ are the non-zero singular values of $\mathbf{X}$ and $\{\sigma^{k}_{i}\}_{i=1}^{2}$ are the non-zero singular values of $\mathbf{J}^{k}$.
After substituting Eq. (\ref{svd1}) and Eq. (\ref{svd2}) into Eq. (\ref{subpx}), the objective function of subproblem  (\ref{step1}) becomes
\begin{align}
L_{\rho}^{k}=\ln{\left|(\mathbf{\Lambda}^{T} \mathbf{\Lambda})^{-1}\right|}
+\frac{\rho}{2}\mathrm{Tr}(\mathbf{\Lambda}^{T} \mathbf{\Lambda})-\mathrm{Tr}\left(\mathbf{J}^{k T}\mathbf{X}\right)+\beta^{k}.
\end{align}

According to Theorem 1 in \cite{xingdongnote}, which is a corollary of the Von Neumann's trace inequality \cite{horn2012matrix}, we have
\begin{align}
\mathrm{Tr}\left(\mathbf{J}^{k T}\mathbf{X}\right)\leq \sum_{i=1}^{2} |\mathrm{Re} \lambda_{i}(\tilde{\mathbf{T}})|\tau_{i}\sigma^{k}_{i},
\end{align}
where,
\begin{align}
\tilde{\mathbf{T}}=\mathbf{U}_{J}^{kT}\mathbf{U}\begin{bmatrix} \mathbf{V}\mathbf{V}_{J}^{kT} & \mathbf{0} \\ \mathbf{0} & \mathbf{I}_{N-2}\end{bmatrix},
\end{align}
where $\lambda_{i}(\tilde{\mathbf{T}})$ denotes the $i$-th eigenvalue of $\tilde{\mathbf{T}}$ with $\lambda_{i-1}(\tilde{\mathbf{T}})>\lambda_{i}(\tilde{\mathbf{T}})$.

Because $\tilde{\mathbf{T}}^{T}\tilde{\mathbf{T}}=\tilde{\mathbf{T}}\tilde{\mathbf{T}}^{T}=\mathbf{I}_{N}$, $\tilde{\mathbf{T}}$ is an
unitary matrix with $|\mathrm{Re} \lambda_{i}(\tilde{\mathbf{T}})|\leq 1$. Then we have,
\begin{align}\label{uptr}
\mathrm{Tr}\left(\mathbf{J}^{k T}\mathbf{X}\right)\leq \sum_{i=1}^{2} \tau_{i}\sigma^{k}_{i}.
\end{align}
It is easy to verify that when $\mathbf{U}=\mathbf{U}_{J}^{k}$ and $\mathbf{V}=\mathbf{V}_{J}^{k}$, the equation in formula (\ref{uptr}) holds, which indicates $L_{\rho}^{k}$ reaches the minimum. Under this condition,
\begin{align}
L_{\rho}^{k}=\ln{\left|(\mathbf{\Lambda}^{T} \mathbf{\Lambda})^{-1}\right|}
+\frac{\rho}{2}\mathrm{Tr}(\mathbf{\Lambda}^{T} \mathbf{\Lambda})-\mathrm{Tr}\left(\mathbf{\Sigma}^{k}\mathbf{\Lambda}\right)+\beta^{k}.
\end{align}

Thereby, $L_{\rho}^{k}$ has transformed as a function of $\{\lambda_{i}\}_{i=1}^{2}$. It is straightforward that the minimum of $L_{\rho}^{k}$ satisfies the following equation.
\begin{align}\label{tidi0}
\nabla L_{\rho}^{k}=-2\mathbf{\Lambda}^{-1} +\rho\mathbf{\Lambda}-\mathbf{\Sigma}^{k}=\mathbf{0}.
\end{align}
Due to the positive property of $\lambda_{i}$, it shows that  when $L_{\rho}^{k}$ reaches the minimum,
\begin{align}
\lambda_{i}=\frac{\sigma^{k}_{i}+\sqrt{(\sigma^{k}_{i})^{2}+8\rho}}{2\rho},~i=1,2.
\end{align}

From above, the update of $\mathbf{X}$ at the $(k+1)$-th iteration is given by
\begin{align}
\mathbf{X}^{k+1}=\mathbf{U}_{J}^{k}  \begin{bmatrix} \mathbf{\Lambda}^{k} & \mathbf{0} \end{bmatrix}   \mathbf{V}_{J}^{k},
\end{align}
where $\mathbf{\Lambda}^{k}=\mathrm{diag}\left(\begin{bmatrix} \frac{\sigma^{k}_{1}+\sqrt{(\sigma^{k}_{1})^{2}+8\rho}}{2\rho} & \frac{\sigma^{k}_{2}+\sqrt{(\sigma^{k}_{2})^{2}+8\rho}}{2\rho} \end{bmatrix} \right)$.

\subsection{Update of $\mathbf{G}$}\label{7semm}
At the $(k+1)$-th iteration, when $\mathbf{X}^{k+1}$ and $\mathbf{V}^{k}$ are given, $L_{\rho}(\mathbf{X}^{k+1},\mathbf{G},\mathbf{V}^{k})$ (simplified as $L_{\rho}^{k}$ in this subsection) in subproblem (\ref{step2}) can be written as
\begin{align}\label{lossG0}
&L_{\rho}^{k}\nonumber\\
&=\ln{\left|(\mathbf{X}^{(k+1)T} \mathbf{X}^{k+1})^{-1}\right|}
+\mathrm{Tr}\left(\mathbf{V}^{kT} (\mathbf{B}^{\frac{1}{2}}\mathbf{D} \mathbf{G}- \mathbf{X}^{k+1}) \right)\nonumber\\
&~~~+\frac{\rho}{2}\|\mathbf{B}^{\frac{1}{2}}\mathbf{D} \mathbf{G}- \mathbf{X}^{k+1}\|^{2}\nonumber\\
&=\frac{\rho}{2}\mathrm{Tr}\left(\mathbf{G}^{T}(\mathbf{B}^{\frac{1}{2}}\mathbf{D})^{T}\mathbf{B}^{\frac{1}{2}}\mathbf{D}\mathbf{G}  \right)+\mathrm{Tr}\left(\mathbf{C}^{kT}\mathbf{B}^{\frac{1}{2}}\mathbf{D}\mathbf{G}\right)+\alpha^{k},
\end{align}
where $\mathbf{C}^{k}=\mathbf{V}^{k}-\rho\mathbf{X}^{k+1}$, $\alpha^{k}=\ln{\left|(\mathbf{X}^{(k+1)T} \mathbf{X}^{k+1})^{-1}\right|}+\mathrm{Tr}\left(\frac{\rho}{2}\mathbf{X}^{(k+1)T}\mathbf{X}^{k+1}-\mathbf{V}^{kT}\mathbf{X}^{k+1} \right)$.
Due to the constraint in subproblem (\ref{step2}), it is difficult to derive any closed-form solution. To solve this non-trivial  problem, majorization-minimization (MM) technique is used here. Note that MM is a useful technique to solve complicated optimization problems appeared in signal processing and communications \cite{sun2016majorization,cheng2021communication}.

Make the following definitions: $\mathbf{M}\triangleq (\mathbf{B}^{\frac{1}{2}}\mathbf{D})^{T}\mathbf{B}^{\frac{1}{2}}\mathbf{D}$, $\mathbf{\tilde{M}}\triangleq \mathbf{M}-\lambda_{m}(\mathbf{M})\mathbf{I}\preceq \mathbf{0}$. Then $L_{\rho}^{k}$ can be written as
\begin{align}
L_{\rho}^{k}&=\frac{\rho}{2}\mathrm{Tr}\left(\mathbf{G}^{T}\mathbf{\tilde{M}}\mathbf{G}  \right) \nonumber\\
&~~~+\frac{\rho}{2}\lambda_{m}(\mathbf{M}) \mathrm{Tr}\left(\mathbf{G}^{T}\mathbf{G}\right)
+\mathrm{Tr}\left(\mathbf{C}^{kT}\mathbf{B}^{\frac{1}{2}}\mathbf{D}\mathbf{G}\right)+\alpha^{k}\nonumber\\
&=\frac{\rho}{2}\mathrm{Tr}\left(\mathbf{G}^{T}\mathbf{\tilde{M}}\mathbf{G}  \right)+\mathrm{Tr}\left(\mathbf{C}^{kT}\mathbf{B}^{\frac{1}{2}}\mathbf{D}\mathbf{G}\right)+\mu^{k},
\end{align}
where $\mu^{k}=\alpha^{k}+\frac{\rho}{2}N\lambda_{m}(\mathbf{M})$. The second equation results from $\mathrm{Tr}\left(\mathbf{G}^{T}\mathbf{G}\right)=N$.

At each iteration of MM, the global upper bound of the objective function of subproblem (\ref{step2}) should be constructed. To achieve it, we first construct a bound of $\mathrm{Tr}\left(\mathbf{G}^{T}\mathbf{\tilde{M}}\mathbf{G}  \right)$, then extend to $L_{\rho}^{k}$. It is easy to verify $\mathrm{Tr}\left(\mathbf{G}^{T}\mathbf{\tilde{M}}\mathbf{G}  \right)$ is a concave function with respect to $\mathbf{G}$. For any given $\mathbf{G}^{k}_{t}$, there must exist an upper bound of $\mathrm{Tr}\left(\mathbf{G}^{T}\mathbf{\tilde{M}}\mathbf{G}  \right)$, shown as follows.
\begin{align}
\mathrm{Tr}\left(\mathbf{G}^{T}\mathbf{\tilde{M}}\mathbf{G} \right)\leq 2\mathrm{Tr}\left(\mathbf{G}^{kT}_{t}\mathbf{\tilde{M}}\mathbf{G} \right)-\mathrm{Tr}\left(\mathbf{G}^{kT}_{t}\mathbf{\tilde{M}}\mathbf{G}^{k}_{t}\right).
\end{align}
Therefore,
\begin{align}
L_{\rho}^{k}\leq \hat{L}_{\rho,t}^{k}=\mathrm{Tr}\left(\mathbf{Q}^{kT}_{t}\mathbf{G}\right)+\upsilon_{t}^{k},
\end{align}
where $\mathbf{Q}^{kT}_{t}=\mathbf{C}^{kT}\mathbf{B}^{\frac{1}{2}}\mathbf{D}+\rho\mathbf{G}^{kT}_{t}\mathbf{\tilde{M}}$ and
$\upsilon_{t}^{k}=-\frac{\rho}{2}\mathrm{Tr}\left(\mathbf{G}^{kT}_{t}\mathbf{\tilde{M}}\mathbf{G}^{k}_{t}\right)+\mu^{k}$.

Denote the update of $\mathbf{G}^{k}$ at the $t$-th iteration of MM as $\mathbf{G}^{k}_{t}$. Then at the $(t+1)$-th iteration, the objective function needs to be minimized is just $\hat{L}_{\rho,t}^{k}$. Consider the constraint in subproblem (\ref{step2}), the $(t+1)$-th update of $\mathbf{G}^{k}$ is computed by solving the following problem.
\begin{align}
\mathbf{G}^{k}_{t+1}=\mathop{\mathrm{argmin}}_{\mathbf{G}\in \mathcal{D}} \mathrm{Tr}\left(\mathbf{Q}^{kT}_{t}\mathbf{G}\right)=\mathop{\mathrm{argmin}}_{\mathbf{G}\in \mathcal{D}} \sum_{i=1}^{N}\mathbf{g}_{i}^{T}\mathbf{q}_{i,t}^{k},
\end{align}
where $\mathbf{q}_{i,t}^{k}$ is the transpose of the $i$-th row of $\mathbf{Q}^{kT}_{t}$.

Let $\mathbf{g}_{i,t+1}^{k}$ denote the transpose of the $i$-th row of $\mathbf{G}^{k}_{t+1}$. After some basic algebraic operations, $\mathbf{g}_{i,t+1}^{k+1}$ is derived as follows.
\begin{align}\label{MMupdate}
&\mathbf{g}_{i,t+1}^{k}= \nonumber\\
&\left\{
	\begin{aligned}
	-\frac{\mathbf{q}_{i,t}^{k}}{\| \mathbf{q}_{i,t}^{k} \|},~~~~~~~~~~~~~~~~~~~~\quad -\frac{\mathbf{q}_{i,t}^{k}}{\| \mathbf{q}_{i,t}^{k} \|}\geq \mathbf{g}_{0}\\
	\mathop{\mathrm{argmin}}_{\mathbf{g}_{i}\in \{\mathbf{g}_{a},\mathbf{g}_{b}\}}\mathbf{g}_{i}^{T}\mathbf{q}_{i,t}^{k}, \quad \overline{-\frac{\mathbf{q}_{i,t}^{k}}{\| \mathbf{q}_{i}^{k,t} \|}\geq \mathbf{g}_{0}} \cap 0<\beta_{max}\leq \pi\\
	\mathop{\mathrm{argmin}}_{\mathbf{g}_{i}\in \{\mathbf{g}_{c},\mathbf{g}_{d}\}}\mathbf{g}_{i}^{T}\mathbf{q}_{i,t}^{k}, \quad \overline{-\frac{\mathbf{q}_{i,t}^{k}}{\| \mathbf{q}_{i,t}^{k} \|}\geq \mathbf{g}_{0} }\cap \pi<\beta_{max}\leq 2\pi \\
	\end{aligned}
	\right
.
,
\end{align}
where
\begin{subequations}
\begin{equation}
\mathbf{g}_{a}=\begin{bmatrix} \cos{\beta_{max}} & \sin{\beta_{max}}\end{bmatrix},
\end{equation}
\begin{equation}
\mathbf{g}_{b}=\begin{bmatrix} 1 & 0\end{bmatrix},
\end{equation}
\begin{equation}
\mathbf{g}_{c}=\begin{bmatrix} \cos{\frac{\pi+\beta_{max}}{2}} & \sin{\frac{\pi+\beta_{max}}{2}}\end{bmatrix},
\end{equation}
\begin{equation}
\mathbf{g}_{d}=\begin{bmatrix} \cos{\frac{5\pi-\beta_{max}}{2}} & \sin{\frac{5\pi-\beta_{max}}{2}}\end{bmatrix}.
\end{equation}
\end{subequations}

Assume the MM converges at the $T$-th iteration, then $\mathbf{G}^{k}_{T}$ is the final update of MM. Due to the property of MM, $\mathbf{G}^{k}_{T}$ is a local optimal solution of subproblem (\ref{step2}). But it is proofed that $\mathbf{G}^{k}_{T}$ is also the global optimal solution of subproblem (\ref{step2}). The details are shown in Appendix \ref{proofMM}.

From above, the update of $\mathbf{G}$ at the $(k+1)$-th iteration is given by
\begin{align}
\mathbf{G}^{k+1}=\mathbf{G}^{k}_{T}.
\end{align}

\subsection{Proposed Algorithm}
\begin{algorithm}[!ht]
\caption{The ADMM-Based Constrained Geometric Configuration Optimization Method (ADMM-CGCOM)}
\label{alg_1}
\begin{algorithmic}[1]
\STATE Input:  $\rho$, $\beta_{max}$, $\mathbf{B}$, $\mathbf{D}$, $\mathbf{\tilde{M}}$, $\mathbf{G}_{u}$, $\mathbf{g}_{0}$, $\mathbf{g}_{a}$, $\mathbf{g}_{b}$, $\mathbf{g}_{c}$, $\mathbf{g}_{d}$
\STATE Output: $\mathbf{G}$
\STATE Set $k=0$
\STATE Initialize $\mathbf{V}^{k}$, $\mathbf{G}^{k}=\mathbf{G}_{u}$, $\mathbf{X}^{k}=\mathbf{B}^{\frac{1}{2}}\mathbf{D}\mathbf{G}^{k}$
\REPEAT
\STATE $\mathbf{J}^{k}=\mathbf{V}^{k}+\rho\mathbf{B}^{\frac{1}{2}}\mathbf{D} \mathbf{G}^{k}$
\STATE $\mathbf{J}^{k}=\mathbf{U}_{J}^{k}\begin{bmatrix} \mathbf{\Sigma}^{k} & \mathbf{0}\end{bmatrix} \mathbf{V}_{J}^{k}$, where $\mathbf{\Sigma}^{k}=\mathrm{diag}\left( \begin{bmatrix} \sigma^{k}_{1} & \sigma^{k}_{2} \end{bmatrix}\right)$ ($\sigma^{k}_{1}\geq \sigma^{k}_{2}\geq0$)
\STATE $\mathbf{X}^{k+1}=\mathbf{U}_{J}^{k}  \begin{bmatrix} \mathbf{\Lambda}^{k} & \mathbf{0} \end{bmatrix}  \mathbf{V}_{J}^{k}$,
where $\mathbf{\Lambda}^{k}=\mathrm{diag}\left(\begin{bmatrix} \frac{\sigma^{k}_{1}+\sqrt{(\sigma^{k}_{1})^{2}+8\rho}}{2\rho} & \frac{\sigma^{k}_{2}+\sqrt{(\sigma^{k}_{2})^{2}+8\rho}}{2\rho} \end{bmatrix} \right)$
\STATE Set $t=0$
\STATE $\mathbf{G}_{t}^{k}=\mathbf{G}^{k}$
\REPEAT
\STATE $\mathbf{Q}^{kT}_{t}=(\mathbf{V}^{k}-\rho\mathbf{X}^{k+1})^{T}\mathbf{B}^{\frac{1}{2}}\mathbf{D}+\rho\mathbf{G}^{kT}_{t}\mathbf{\tilde{M}}$,
$\begin{bmatrix} \mathbf{q}_{1,t}^{kT} &  \mathbf{q}_{2,t}^{kT} &  \cdots & \mathbf{q}_{N,t}^{kT} \end{bmatrix}=\mathbf{Q}^{kT}_{t}$
\STATE $i=1$
\REPEAT
\IF{$-\frac{\mathbf{q}_{i,t}^{k}}{\| \mathbf{q}_{i,t}^{k} \|}\geq \mathbf{g}_{0}$}
\STATE $\mathbf{g}_{i,t+1}^{k}=-\frac{\mathbf{q}_{i,t}^{k}}{\| \mathbf{q}_{i,t}^{k}\|}$
\ELSE
\IF{$0<\beta_{max}\leq \pi$}
\STATE $\mathbf{g}_{i,t+1}^{k}=\mathop{\mathrm{argmin}}_{\mathbf{g}_{i}\in \{\mathbf{g}_{a},\mathbf{g}_{b}\}}\mathbf{g}_{i}^{T}\mathbf{q}_{i,t}^{k}$
\ELSE
\STATE $\mathbf{g}_{i,t+1}^{k}=\mathop{\mathrm{argmin}}_{\mathbf{g}_{i}\in \{\mathbf{g}_{c},\mathbf{g}_{d}\}}\mathbf{g}_{i}^{T}\mathbf{q}_{i,t}^{k}$
\ENDIF
\ENDIF
\STATE $i=i+1$
\UNTIL {$i=N$}
\STATE  Update $\mathbf{G}_{t+1}^{k}=\begin{bmatrix} \mathbf{g}_{1,t+1}^{kT} &  \mathbf{g}_{2,t+1}^{kT} &  \cdots & \mathbf{g}_{N,t+1}^{kT} \end{bmatrix}$
\STATE $t=t+1$
\UNTIL {convergence}
\STATE Update $\mathbf{G}^{k+1}=\mathbf{G}_{t+1}^{k}$
\STATE Update $\mathbf{V}^{k+1}=\mathbf{V}^{k}+\rho(\mathbf{B}^{\frac{1}{2}}\mathbf{D} \mathbf{G}^{k+1}-\mathbf{X}^{k+1})$
\STATE $k=k+1$
\UNTIL {convergence}
\end{algorithmic}
\end{algorithm}

So far, we have derived the update rules in ADMM framework. Since problem $\mathrm{(P3)}$ is non-convex, although convergent\footnote{The convergence of non-convex ADMM is still an open question. As illustrated in \cite{sahu2022optimal}, the approaches mentioned in \cite{hong2016convergence,wang2019global,liu2019linearized} can be adapted to prove the convergence of ADMM iterations to a KKT point of the respective optimal design problems. Moreover, in our simulation studies, the ADMM always convergence on the focused problem.}, the non-convex ADMM may not lead to a global optimal solution. To make the convergence of non-convex ADMM more effective, the initialization of ADMM should be meticulously designed. Under simulation studies, the uniform placement strategy is adopted for the initialization. Note that uniform placement strategy has been demonstrated to be optimal on some basic geometric configuration optimizations  \cite{zhao2013optimal,heydari2020optimal}. As for the focused problem, the horizontal angles of UAVs following this strategy are given by
\begin{align}\label{iniangle}
\beta_{u,i}=\frac{\beta_{max}}{N}i,~\forall i.
\end{align}
Let $\mathbf{G}_{u}$ denote the related $\mathbf{G}$ under (\ref{iniangle}). Finally, the proposed ADMM-based constrained geometric configuration optimization method (ADMM-CGCOM) is given by Algorithm \ref{alg_1}.

\begin{remark}
Although Algorithm \ref{alg_1} is designed for the RSSD-based localization as described in Section II, it can be extended  to other source localization technologies well. Specifically, for RSS-based localization \cite{xu2019optimal} in where the transmit power of source is known, Algorithm \ref{alg_1} still works by defining $B\triangleq \mathbf{W}$. Moreover, Algorithm \ref{alg_1} also works for TOA-based localization \cite{xu2021optimal} and TDOA-based localization \cite{isaacs2009optimal} by defining $\mathrm{B}$ and $\mathrm{D}$ accordingly. These extensions are straightforward, since general frameworks of these technologies on FIM have already been presented  \cite{sahu2022optimal,zhao2013optimal}, thus we omit here.
\end{remark}

\begin{remark}
Now, the main computational complexities of Algorithm \ref{alg_1} are analyzed in terms of floating-point operations. To generalize the result, let $K$ as the dimension of estimated position. The computational complexity for initialization, which primarily  depends on computing the square root and  matrix multiplication, is approximately  $\mathcal{O}(N^3+N^2K)$. For the outer loop, its computational complexity arises from computing the SVD and matrix multiplication, and is approximated as $\mathcal{O}(N^2K+NK^2+NK+N^2)$. For the inner loop, its computational complexity mainly involves the traversal and matrix multiplication, and is approximated as $\mathcal{O}(N^2K+NK)$. Therefore, the overall computational complexity of Algorithm \ref{alg_1} is $\mathcal{O}(N^3+I_{1}( N^2K+NK^2+I_{2}(N^2K)))$, where $I_{1}$  and $I_{2}$ represent the iterations of the outer and  inner loop, respectively.
\end{remark}

\begin{remark}
In some practical scenarios, both the UAV-target distance and UAV-target horizontal angle are need to be constrained and optimized. Thorough analysis in Appendix \ref{proofre3} shows that the optimization of horizontal angle and the optimization of distance are independent. Furthermore, regardless of the horizontal angular configuration, the optimal UAV-target distance of UAV remains unchanged for each UAV. These findings substantiate that Algorithm \ref{alg_1} is applicable and effective in these practical scenarios.
\end{remark}


\section{Simulations and Discussions}

\begin{figure}
   \subfloat[ \label{fig_s1}$\beta_{max}=120^{o}$.$\bar{k}_{MM}=3$.]{%
       \includegraphics[width=0.235\textwidth]{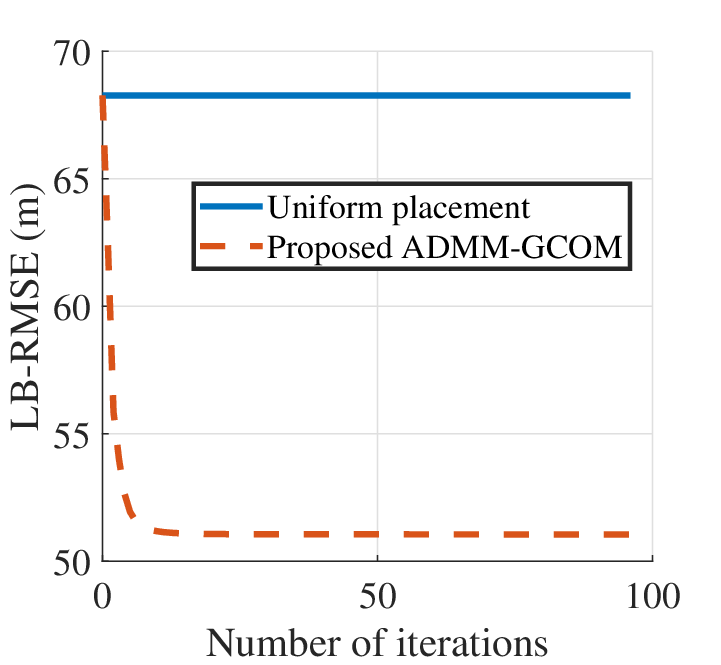}
     }
     \hfill
     \subfloat[\label{fig_s2}$\beta_{max}=200^{o}$.$\bar{k}_{MM}=4$.]{%
       \includegraphics[width=0.235\textwidth]{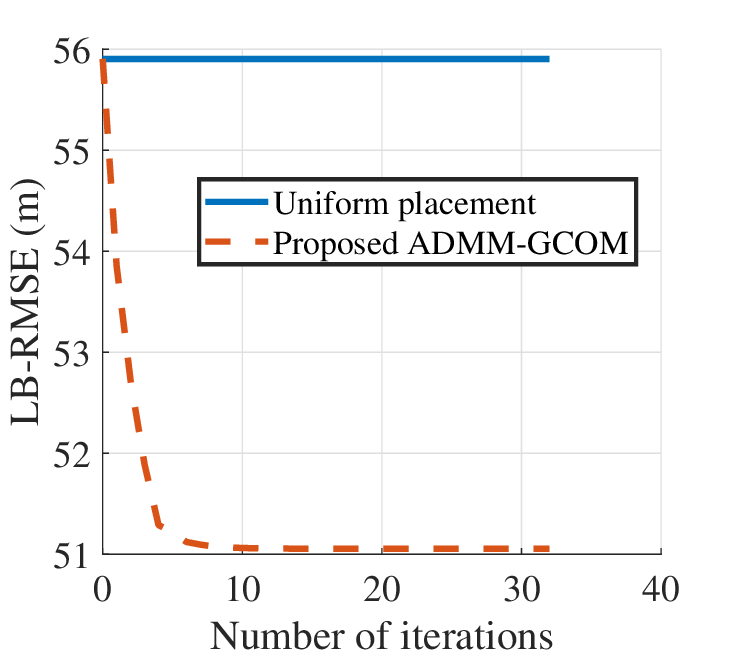}
     }
     \\
      \subfloat[ \label{fig_s3}$\beta_{max}=280^{o}$.$\bar{k}_{MM}=4$.]{%
       \includegraphics[width=0.235\textwidth]{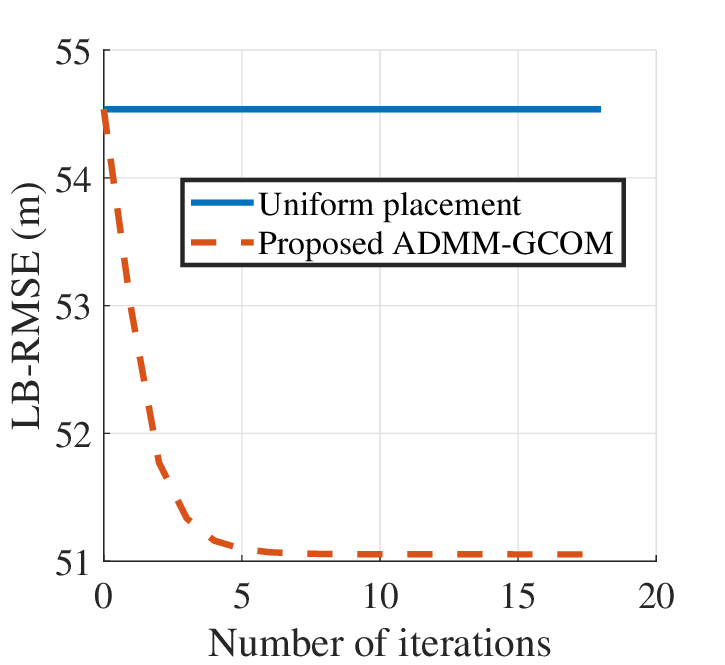}
     }
     \hfill
     \subfloat[\label{fig_s4}$\beta_{max}=360^{o}$.$\bar{k}_{MM}=4$.]{%
       \includegraphics[width=0.235\textwidth]{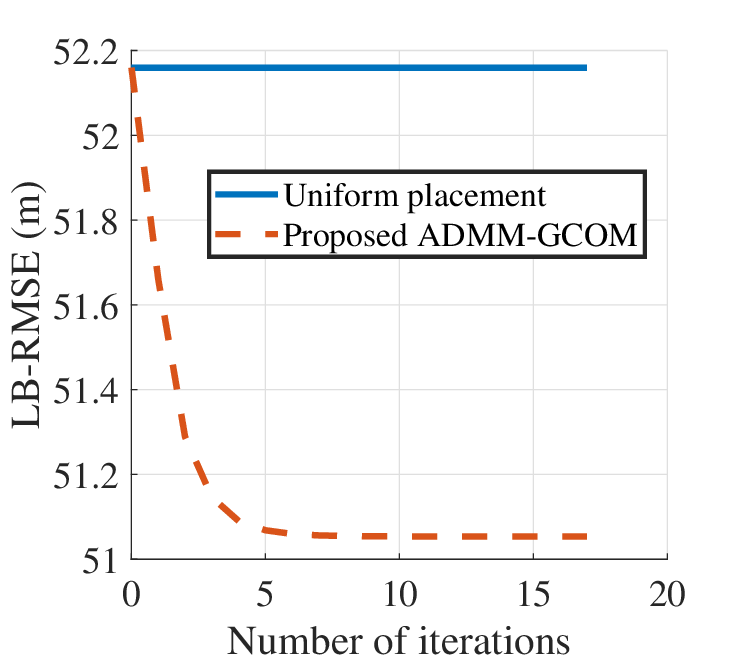}
     }
     \caption{Convergence plots with different constrained angles in case A.}
     \label{Cp_A}
\end{figure}

\begin{figure}
   \subfloat[ \label{fig_s5}$\beta_{max}=120^{o}$.$\bar{k}_{MM}=2$.]{%
       \includegraphics[width=0.235\textwidth]{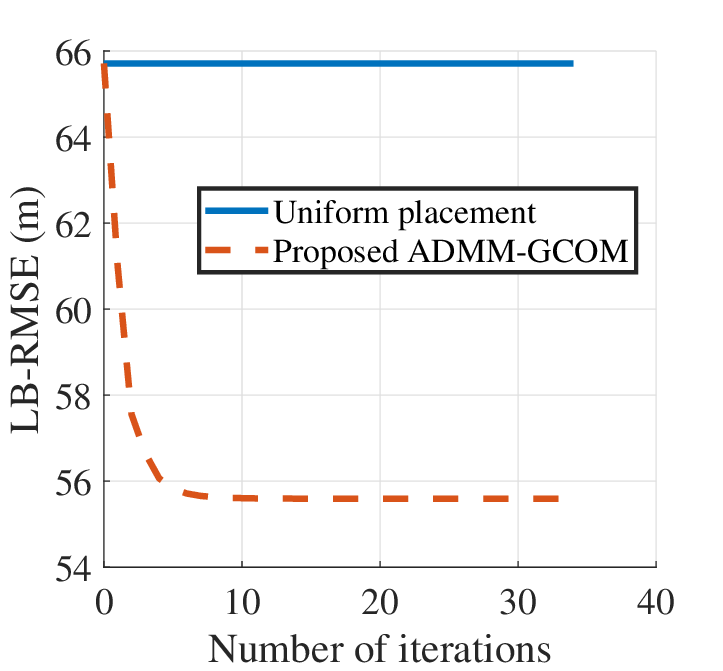}
     }
     \hfill
     \subfloat[\label{fig_s6}$\beta_{max}=200^{o}$.$\bar{k}_{MM}=2$.]{%
       \includegraphics[width=0.235\textwidth]{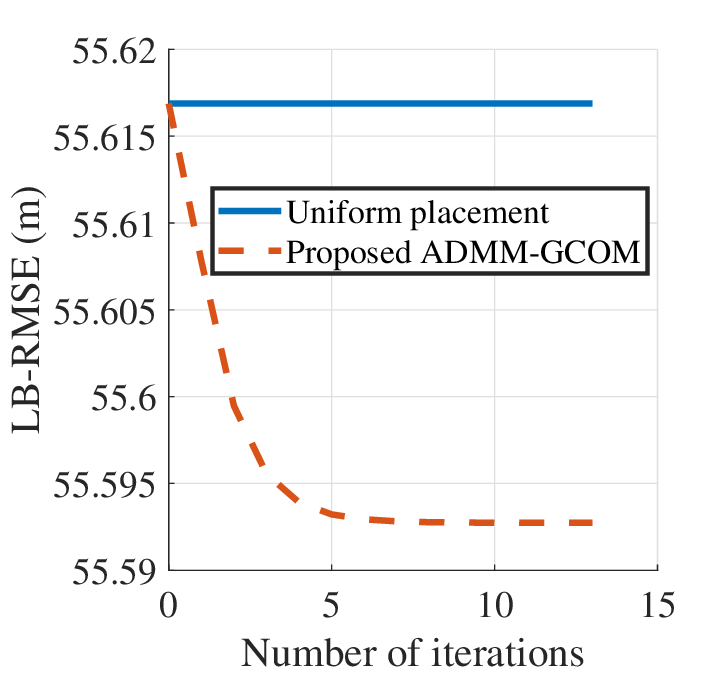}
     }
     \\
      \subfloat[ \label{fig_s7}$\beta_{max}=280^{o}$.$\bar{k}_{MM}=2$.]{%
       \includegraphics[width=0.235\textwidth]{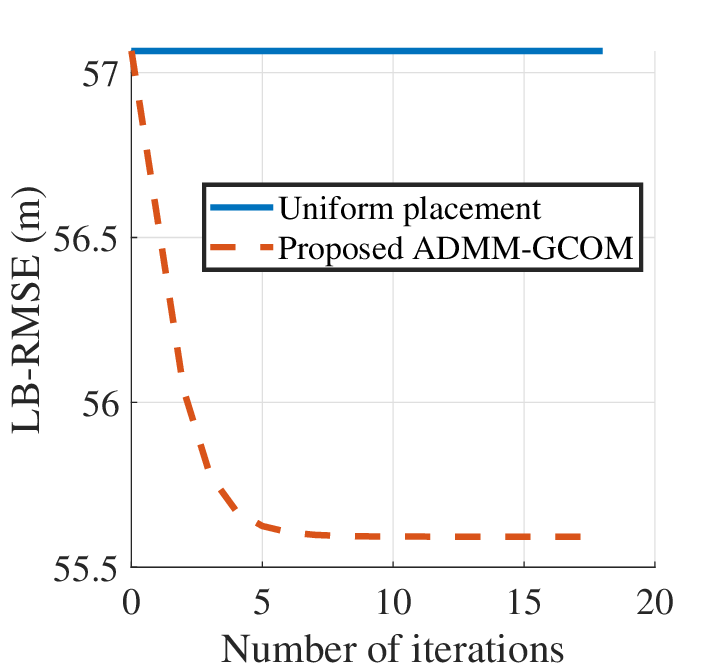}
     }
     \hfill
     \subfloat[\label{fig_s8}$\beta_{max}=360^{o}$.$\bar{k}_{MM}=2$.]{%
       \includegraphics[width=0.235\textwidth]{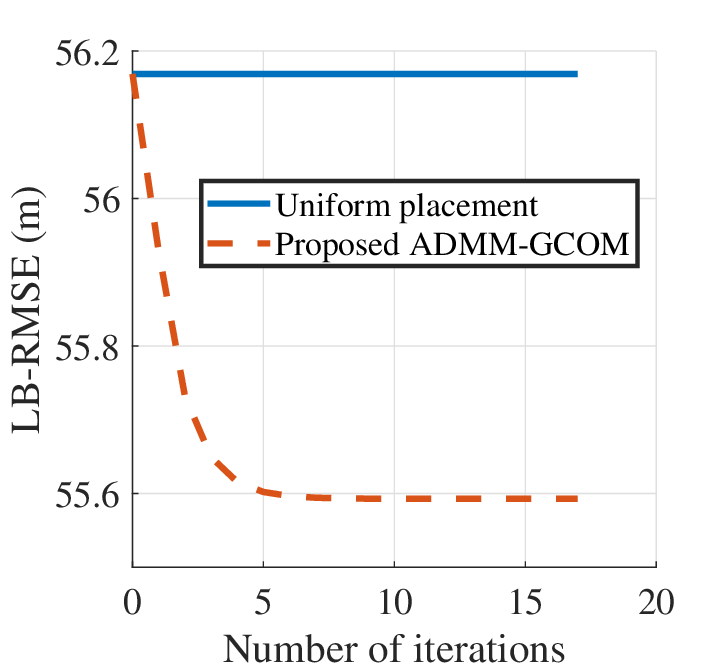}
     }
     \caption{Convergence plots with different constrained angles in case B.}
     \label{Cp_B}
\end{figure}

This section presents simulation results to evaluate the performance of the proposed ADMM-CGCOM, compared to  the commonly used uniform placement strategy \cite{zhao2013optimal,heydari2020optimal}.

Without loss of generality, the default settings are shown as follows:  $N=8$, $\gamma=2$, $r_{i}=1000~\mathrm{m}$, $h_{i}=100~\mathrm{m},~\forall i$. To make the simulation comprehensive, two typical cases named case A and case B are considered. In case A, the measurement conditions of UAVs are different. Specifically, half of the measuring noise variances are equal to $8~\mathrm{dBm}$ while the others are equal to $2~\mathrm{dBm}$,  i.e., $\sigma_{i}^2=8,~i=1,\cdots,\frac{N}{2}$ and $\sigma_{i}^2=2,~i=\frac{N}{2}+1,\cdots,N$. In case B, the measurement conditions of UAVs are identical. Specifically, all measuring noise variance are equal to $4~\mathrm{dBm}$ ,  i.e.,$\sigma_{i}^2=4,~i=1,\cdots,N$. In addition, each UAV samples $10$ times quickly at the same measuring position and then averages the sampled RSSs to get the final measurement value.

Under the above settings, the proposed ADMM-CGCOM is used to optimize the geometric configuration of measuring UAVs, namely $\{\beta_{i}\}_{i=1}^{N}$. To evaluate the optimized geometric configuration, a widely used lower bound of RMSE (abbreviated as LB-RMSE) is used as the performance metric of geometric configuration\footnote{LB-RMSE can be viewed as the lowest average distance error for any unbiased estimation.  It is more intuitive and straightforward to evaluate geometric configurations than the determinant of FIM.}. Mathematically, LB-RMSE $=\sqrt{\mathbf{F}^{-1}(2,2)+\mathbf{F}^{-1}(3,3)}$. Moreover, the convergence threshold of ADMM is $10^{-4}$ and the convergence threshold of MM is $10^{-3}$ in the simulation.

\begin{figure}
   \subfloat[ \label{fig_s9}$\beta_{max}=120^{o}$.]{%
       \includegraphics[width=0.235\textwidth]{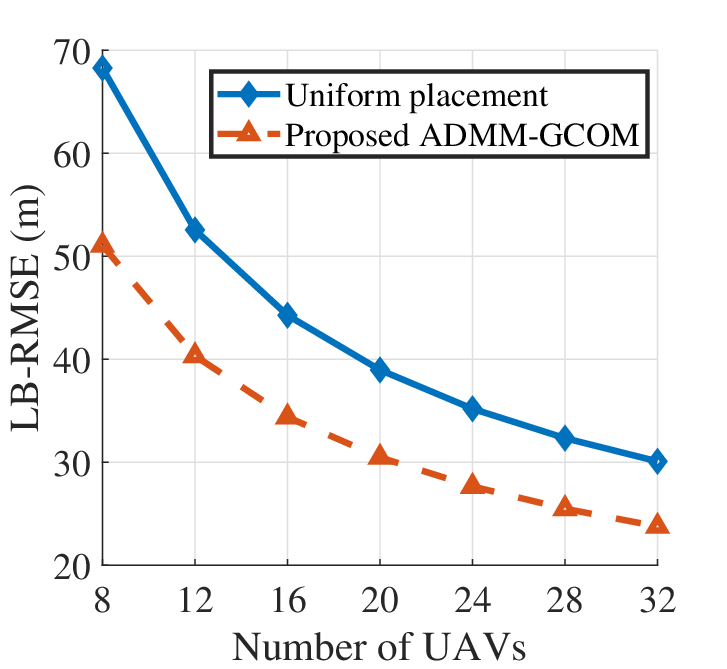}
     }
     \hfill
     \subfloat[\label{fig_s10}$\beta_{max}=200^{o}$.]{%
       \includegraphics[width=0.235\textwidth]{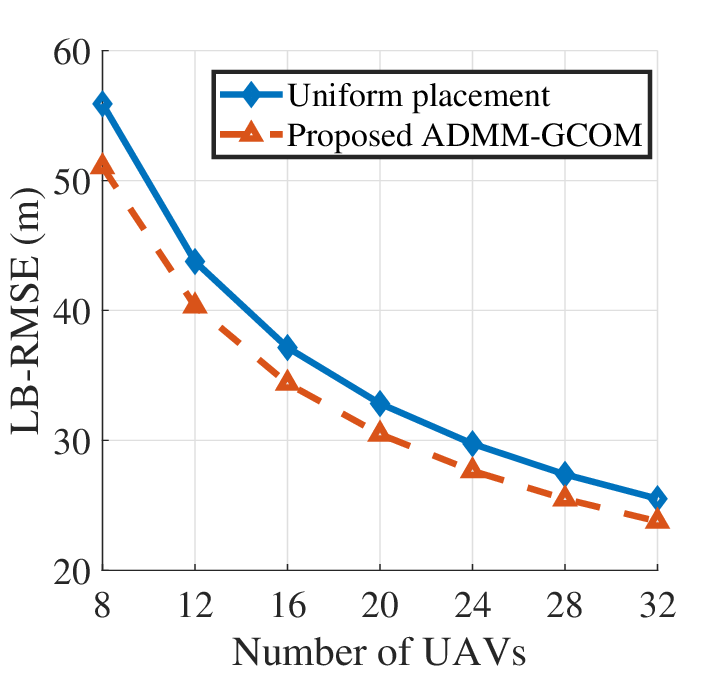}
     }
     \\
      \subfloat[ \label{fig_s11}$\beta_{max}=280^{o}$.]{%
       \includegraphics[width=0.235\textwidth]{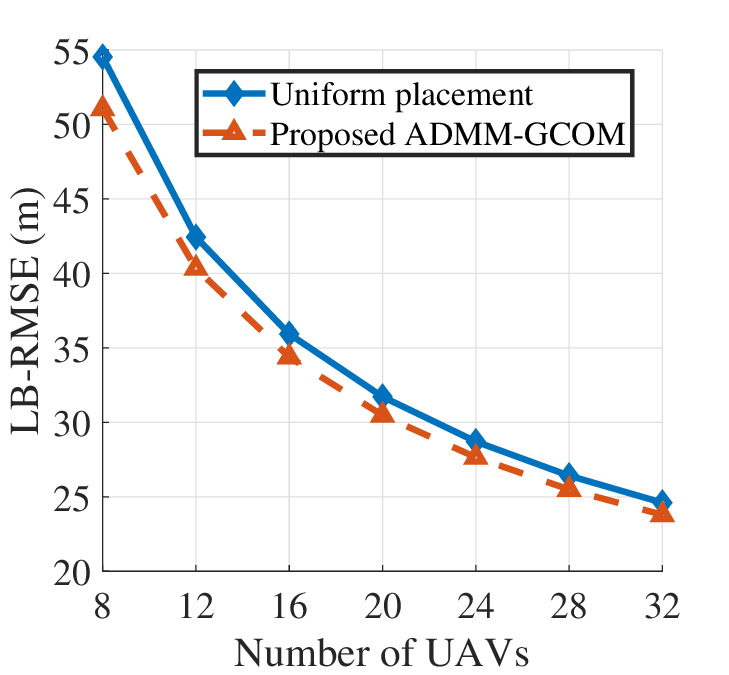}
     }
     \hfill
     \subfloat[\label{fig_s12}$\beta_{max}=360^{o}$.]{%
       \includegraphics[width=0.235\textwidth]{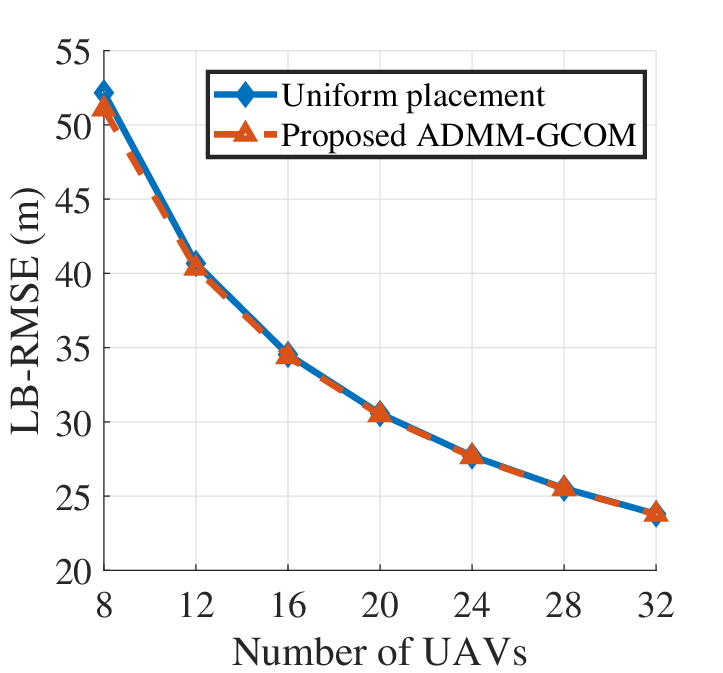}
     }
     \caption{LB-RMSE versus the number of UAVs with different constrained angles in case A.}
     \label{RUAV_A}
\end{figure}
\begin{figure}
   \subfloat[ \label{fig_s13}$\beta_{max}=120^{o}$.]{%
       \includegraphics[width=0.235\textwidth]{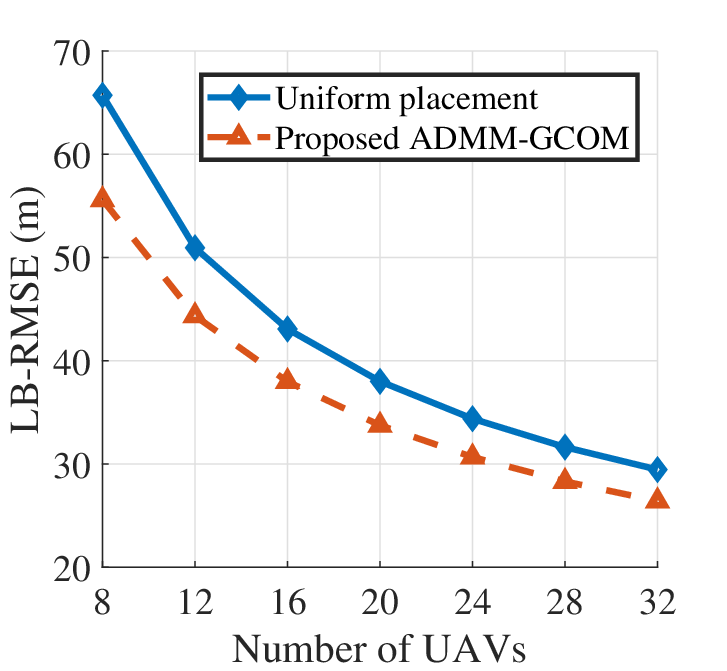}
     }
     \hfill
     \subfloat[\label{fig_s14}$\beta_{max}=200^{o}$.]{%
       \includegraphics[width=0.235\textwidth]{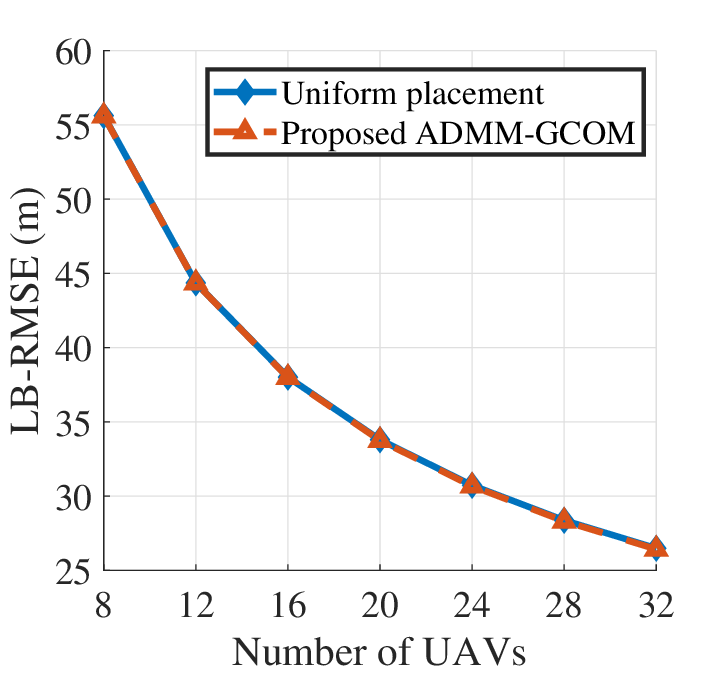}
     }
     \\
      \subfloat[ \label{fig_s15}$\beta_{max}=280^{o}$.]{%
       \includegraphics[width=0.235\textwidth]{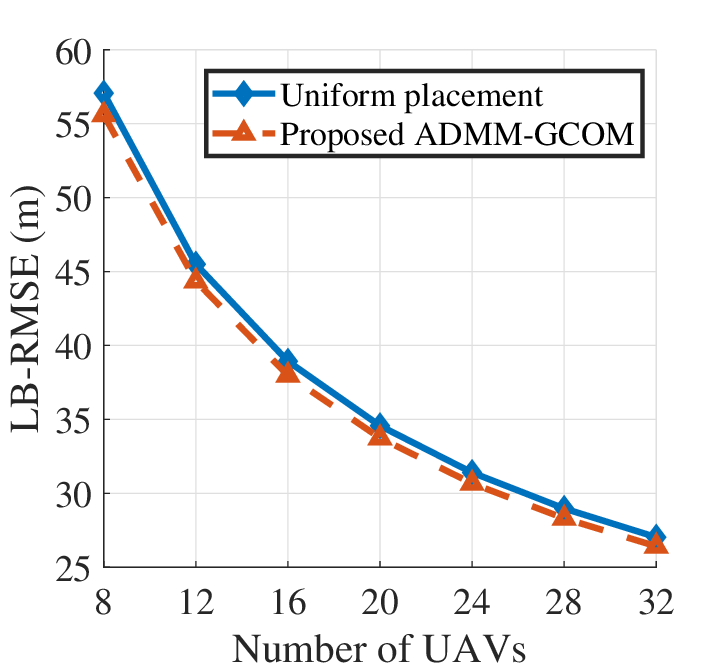}
     }
     \hfill
     \subfloat[\label{fig_s16}$\beta_{max}=360^{o}$.]{%
       \includegraphics[width=0.235\textwidth]{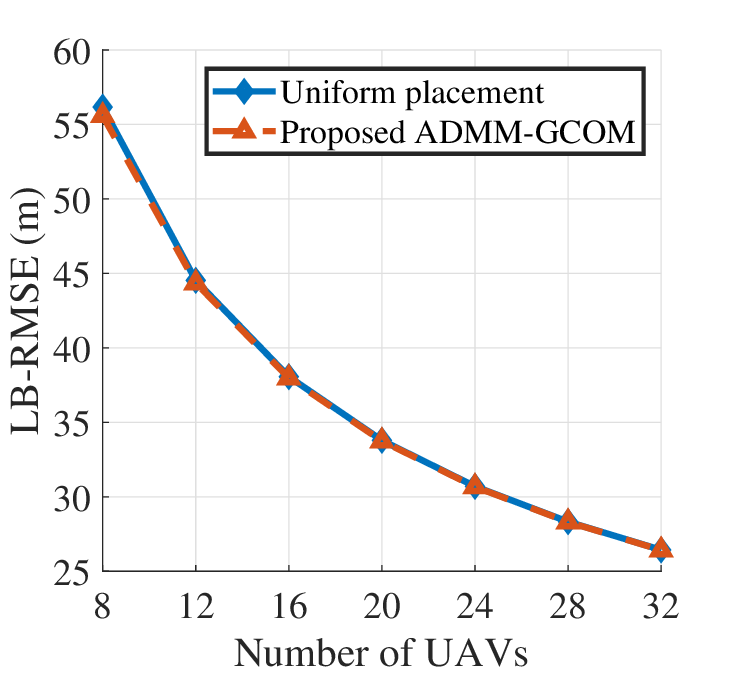}
     }
     \caption{LB-RMSE versus the number of UAVs with different constrained angles in case B.}
     \label{RUAV_B}
\end{figure}

Fig.~\ref{Cp_A} shows the convergence curve of the proposed ADMM-CGCOM under different constrained angles in case A. As seen, starting from the uniform placement strategy, the proposed algorithm can further reduce LB-RMSE, thereby improving the performance of localization system. In addition, the improvement becomes obvious just after a few iterations. For example, after $10$ iterations, the proposed algorithm reduces LB-RMSE by nearly $6\%$ when $\beta_{max}=280^{o}$ and reduces LB-RMSE by nearly $25\%$ when $\beta_{max}=120^{o}$. Fig.~\ref{Cp_A} also shows that under different constrained angles, the proposed algorithm converges within $100$ iterations while the average iteration time of the internal MM is no more than $4$.

Fig.~\ref{Cp_B} shows the convergence curve of the proposed ADMM-CGCOM under different constrained angles in case B. Similar to case A as shown in Fig.~\ref{Cp_A}, from Fig.~\ref{Cp_B}, it is concluded that the proposed algorithm improves the performance of the localization system compared to the uniform placement strategy obviously just after a few iterations. Moreover, it is found that the improvement ratio of the proposed algorithm in case A (inconsistent measure conditions) is more significant than case B (identical measure conditions).

Fig.~\ref{RUAV_A} presents the performance of the proposed ADMM-CGCOM versus the number of UAVs under different constrained angles in case A while  Fig.~\ref{RUAV_B} presents that in case B. From Fig.~\ref{RUAV_A} and Fig.~\ref{RUAV_B},  it is observed that LB-RMSE decreases largely with increasing number of UAVs under different constrained angles. Moreover, it is seen that the proposed algorithm is over the uniform placement strategy with different scales of drone swarm, and the improvement ratio of the proposed algorithm with respect to the uniform placement strategy basically does not change versus the number of UAVs under different constrained angles.

Theoretically, when using the proposed ADMM-CGCOM, the real position of source should be known. It is unrealistic. But with just small adjustment, the proposed algorithm can be used to refine a rough prior estimator in the practical scenario. Specifically, the prior estimator is used to replace the real position of source when using ADMM-CGCOM. In this way, the measuring positions of UAVs are obtained by ADMM-CGCOM, then UAVs can make effective measurement and estimation. Simulations about this practical scenario are shown in  Fig.~\ref{Angle_A} and Fig.~\ref{Angle_B}.

Fig.~\ref{Angle_A} presents the performance of the proposed ADMM-CGCOM versus the constrained angle considering both theoretical and practical scenarios in case A while Fig.~\ref{Angle_B} presents that in case B. In the practical scenario, the prior estimator is unbiased and the estimation error follows gaussian distribution with standard deviation equal to $\sqrt{12500}~\mathrm{m}$. As observed from Fig.~\ref{Angle_A} and Fig.~\ref{Angle_B}, the proposed algorithm is over the uniform placement strategy. Moreover, when the constrained angle exceeds a threshold ($\beta_{max}>=97.5^{o}$ in case A and $\beta_{max}>=105^{o}$ in case B), the performance of the proposed algorithm is basically unchanged as the constrained angle increases. Therefore, the proposed algorithm does not require a large constrained angle to achieve a considerable performance. Finally, as seen in Fig.~\ref{Angle_A} and Fig.~\ref{Angle_B}, in the practical scenario, although the standard deviation of prior estimation is large, the performance of the proposed algorithm almost does not deviate from the theoretical value. Therefore the prior estimation can be refined well by the proposed algorithm.

\begin{figure}[!ht]
  \centering
  \includegraphics[width=0.5\textwidth]{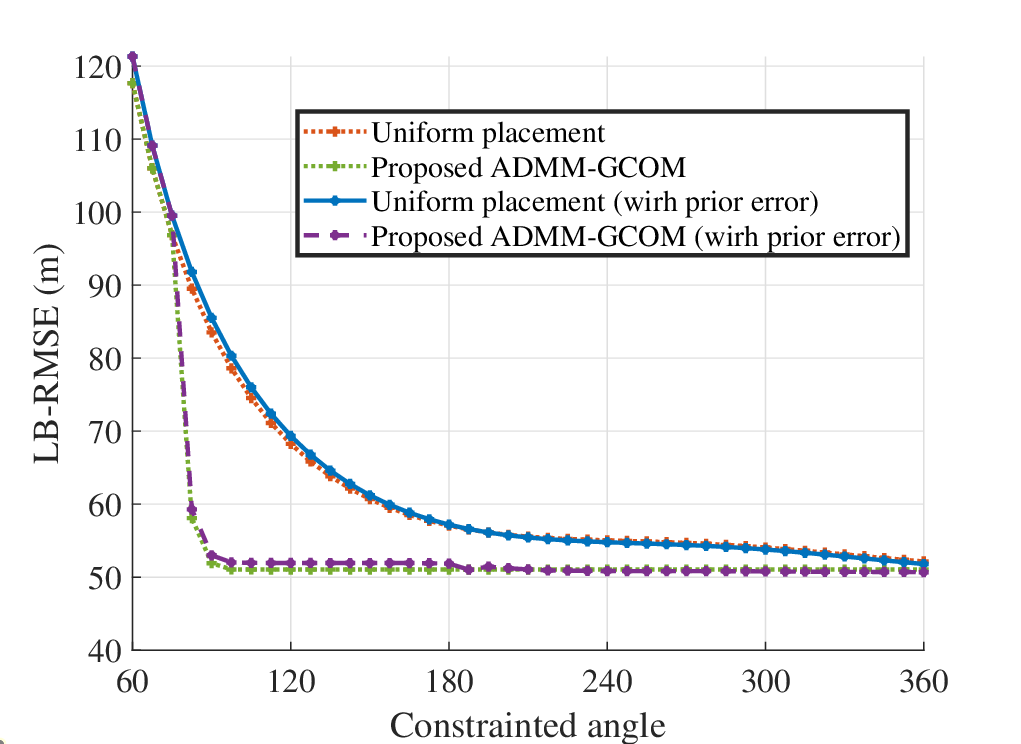}\\
  \caption{~LB-RMSE versus the constrained angle in case A.}\label{Angle_A}
\end{figure}

\begin{figure}[!ht]
  \centering
  \includegraphics[width=0.5\textwidth]{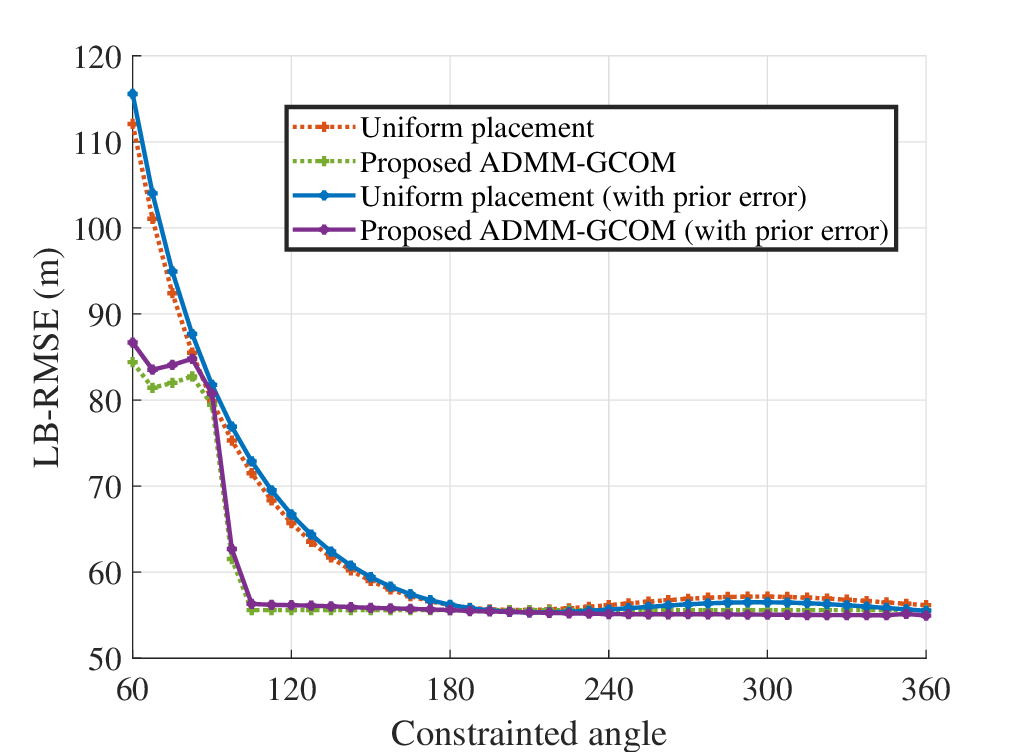}\\
  \caption{~LB-RMSE versus the constrained angle in case B.}\label{Angle_B}
\end{figure}

\section{Conclusion}
In this paper we have investigated the geometric configuration of RSSD-based passive source localization system by drone swarm with spread angle constraint. We have formulated the constrained geometric configuration optimization problem under the D-optimization criterion, and transformed it from a multi-scalar form to a single-matrix matrix form, then proposed a numerical optimization method named ADMM-CGCOM. The proposed ADMM-CGCOM has been compared with the uniform placement strategy in simulations, where we have shown that ADMM-CGCOM leads to  a lower LB-RMSE of the final estimation. Moreover, the function of ADMM-CGCOM on refining the initial estimation in the practical scenario has been demonstrated. Using the proposed ADMM-CGCOM, the effective measuring positions of drone swarm with constrained spread angle can be obtained.

In practical applications, the locating area of a draw swarm may contain multiple interested signal sources, making it challenging to directly apply the proposed algorithm. Therefore, it is worth further investigating how to distinguish the RSS of multiple targets and how to optimize the geometric configuration of drone swarm when comprehensively considering the positioning errors of multiple targets. Furthermore, in practical scenarios, the prior estimation is directly employed to substitute the true position of the target during the optimization of the geometric configuration. To enhance the measuring effectiveness and mitigate the risk of violating actual measuring region constrains, further research is essential to develop more sophisticated models and to advance the proposed algorithm by leveraging the statistic characteristics of prior estimation.

\appendices
\section{Proof of Theorem \ref{thm1}}\label{appendixa}
Let $\overline{\lambda}$ denote the mean of eigenvalues of $\mathbf{T}$. From Eq. (\ref{deT}), it shows that $\mathbf{T}\in \mathbb{R}^{2\times2}$ and $\mathbf{T}$ is a symmetric matrix. Therefore,
\begin{align}\label{tezhengT}
\left|\mathbf{T}\right|
&=-\left\|\mathbf{T}\right\|^{2}+2\overline{\lambda}^{2} \nonumber\\
&=-\left\|\mathbf{T}\right\|^{2}+\frac{1}{2}\left(\sum_{i=1}^{N}w_{i}\frac{r_{i}^2}{d_{i}^4}-\left\|\sum_{i=1}^{N}w_{i}\frac{r_{i}}{d_{i}^2} \mathbf{g}_{i}\right\|^{2}\right)^2.
\end{align}

Let $\mathbf{U}\in \mathbb{R}^{2\times2}$ represent an orthogonal matrix. For an arbitrary geometric configuration denoted as $\{\mathbf{g}_{i}\}_{i=1}^{N}$, we make an orthogonal transformation by $\mathbf{U}$ to obtain a new geometric configuration,
denoted as $\{\mathbf{g^{'}}_{i}\}_{i=1}^{N}$, i.e., $\mathbf{g^{'}}_{i}=\mathbf{U}\mathbf{g}_{i},~\forall i$. According to Eq. (\ref{deT}), it shows that $\mathbf{T}\left(\{\mathbf{g^{'}}_{i}\}_{i=1}^{N}\right)=\mathbf{U}\mathbf{T}
\left(\{\mathbf{g}_{i}\}_{i=1}^{N}\right)\mathbf{U}^{T}$. Then, we have
\begin{align}\label{eqt1}
&\|\mathbf{T}\left(\{\mathbf{g^{'}}_{i}\}_{i=1}^{N}\right)\|^{2}_{2} \nonumber\\
&=\mathrm{Tr}\left(\mathbf{U}\mathbf{T}
\left(\{\mathbf{g}_{i}\}_{i=1}^{N}\right)\mathbf{U}^{T}\mathbf{U}\mathbf{T}
\left(\{\mathbf{g}_{i}\}_{i=1}^{N}\right)^{T}\mathbf{U}^{T}\right) \nonumber\\
&=\|\mathbf{T}\left(\{\mathbf{g}_{i}\}_{i=1}^{N}\right)\|^{2}.
\end{align}
Similarly,
\begin{align}\label{eqt2}
\left\|\sum_{i=1}^{N}w_{i}\frac{r_{i}}{d_{i}^2}\mathbf{g}_{i}^{'}\right\|^{2}=\left\|\sum_{i=1}^{N}w_{i}\frac{r_{i}}{d_{i}^2} \mathbf{g}_{i}\right\|^{2}.
\end{align}

According to Eq. (\ref{tezhengT}),  Eq. (\ref{eqt1}) and  Eq. (\ref{eqt2}), we have $|\mathbf{T}^{'}|=|\mathbf{T}|$. Therefore, the effectiveness of geometric configuration (the determinant of FIM) is unchanged after the orthogonal transformation. Moreover, referring  \cite{zhao2013optimal},  orthogonal transformation means horizontal overall rotation or horizontal overall with respect to the source in the focused localization system.

\section{The Global Optimal Property of $\mathbf{G}^{k}_{T}$}\label{proofMM}
According to the definition of $\mathbf{M}$, $L_{\rho}^{k}$ in Eq. (\ref{lossG0}) can be expressed as
\begin{align}
L_{\rho}^{k}
&=\frac{\rho}{2}\mathrm{Tr}\left(\mathbf{G}^{T}\mathbf{M}\mathbf{G}  \right)+\mathrm{Tr}\left(\mathbf{Q}^{kT}\mathbf{G}\right)+\alpha^{k} \nonumber\\
&=\frac{\rho}{2}\sum_{i=1}^{N}\mathbf{g}_{i}^{T}\mathbf{M}\mathbf{g}_{i}+\sum_{i=1}^{N}\mathbf{q}_{i}^{kT}\mathbf{g}_{i},
\end{align}
where $\mathbf{Q}^{kT}=\mathbf{C}^{kT}\mathbf{B}^{\frac{1}{2}}\mathbf{D}$ and $\mathbf{q}_{i}^{kT}$ is the $i$-th row of $\mathbf{Q}^{kT}$. The second-order derivative of $L_{\rho}^{k}$ with respect to $\mathbf{g}_{i}$ is given by
\begin{align}
\nabla_{\mathbf{g}_{i}}^{2}L_{\rho}(\mathbf{X}^{k+1},\mathbf{G},\mathbf{V}^{k})=\rho\mathbf{M}\succeq \mathbf{0},~\forall i.
\end{align}
Therefore, the objective function of subproblem (\ref{step2})) is a convex function with respect to $\mathbf{g}_{i}$, $\forall i$.  But the constraint $\mathcal{D}$ in subproblem (\ref{step2}) is not convex. Therefore, the convergence of MM may be local optimal. However, the global optimal property of it is demonstrated by the following analysis.

First, we slack $\mathcal{D}$ to generate a new set denoted as $\overline{\mathcal{D}}$  as follows.
\begin{align}
\overline{\mathcal{D}}=\{\mathbf{G}|\mathbf{g}_{i}^{T}\mathbf{g}_{i}\leq1,~\mathbf{g}_{i}\geq \mathbf{g}_{0},~\forall i\}.
\end{align}
Assume $\mathbf{G}_{e}\in \overline{\mathcal{D}}$,$\mathbf{G}_{f}\in \overline{\mathcal{D}}$. Let $\mathbf{g}_{e,i}$ denote the transpose of the $i$-th row of $\mathbf{G}_{e}$ and $\mathbf{g}_{f,i}$ denote the transpose of the $i$-th row of $\mathbf{G}_{f}$. For any $(\alpha, \beta)$ satisfying $0\leq\alpha\leq1 \cup 0\leq\beta\leq1 \cup \alpha+\beta=1$, we have
 \begin{align}
\alpha\mathbf{g}_{e,i}+\beta \mathbf{g}_{f,i}\geq  \begin{bmatrix} (\alpha+\beta)  \mathbf{g}_{0}(1) & (\alpha+\beta) \mathbf{g}_{0}(2)\end{bmatrix}     \geq \mathbf{g}_{0},
\end{align}
 \begin{align}
&(\alpha\mathbf{g}_{e,i}+\beta \mathbf{g}_{f,i})^{T}(\alpha\mathbf{g}_{e,i}+\beta \mathbf{g}_{f.i})\nonumber\\
&=\alpha^2\|\mathbf{g}_{e,i}\|^{2}+\beta^2\|\mathbf{g}_{f,i}\|^{2}+2\alpha\beta\mathbf{g}_{e,i}^{T}\mathbf{g}_{f,i}  \leq (\alpha+\beta)^{2}=1.
\end{align}
Therefore, $\overline{\mathcal{D}}$ is a convex set.

Then, taking $\overline{\mathcal{D}}$ as the constraint, a new subproblem based on subproblem (\ref{step2}) is considered as follows.
\begin{align}\label{pro22}
\mathbf{G}^{k+1}=\mathop{\mathrm{argmin}}_{\mathbf{G}\in \overline{\mathcal{D}}}  L_{\rho}(\mathbf{X}^{k+1},\mathbf{G},\mathbf{V}^{k}).
\end{align}
From above, it shows that subproblem (\ref{pro22}) is convex.

Finally, compare the two subproblems. When using the MM algorithm in Section \ref{7semm}, it is easy to verify that the obtained solution of subproblem (\ref{pro22}) ($\mathbf{G}^{k}_{T}$) is the same as that in subproblem (\ref{step2}). Note that $\mathbf{G}^{k}_{T}$ must be a local optimal solution of subproblem (\ref{pro22}). And it is  well known that the local optimal solution of a convex problem is also the global optimal solution. Therefore, $\mathbf{G}^{k}_{T}$ is global optimal on subproblem (\ref{pro22}). Since $\mathcal{D}\in\overline{\mathcal{D}}$, subproblem (\ref{pro22}) is a slack version of subproblem (\ref{step2}), it can be concluded   that $\mathbf{G}^{k}_{T}$ is also the  global optimal solution of subproblem (\ref{step2}).

\section{Analysis For  More Practical Scenarios}\label{proofre3}
Consider more practical scenarios where the UAV-target horizontal distance and the UAV-target vertical distance are constrained in addition to horizontal angles.  Let $r_{i}$ represent the horizontal distance  and $h_{i}$ represent the vertical horizontal distance for the $i$-th UAV, respectively. Mathematically, $r_{i}= \sqrt{(x-x_{i})^2+(y-y_{i})^2}$, $h_{i}=z_{i}$, $r_{i}\in \mathcal{D}_{r,i}$, $h_{i}\in \mathcal{D}_{h,i}$, where $\mathcal{D}_{r,i}$ and $\mathcal{D}_{h,i}$ are  constraint sets defined by  the specific scenario.

Based on Section \ref{SeIII}, given an arbitrary UAV-target horizontal angular geometric configuration, the problem of distance optimization is given by
\begin{align}
\mathrm{(P^{'})}:&\max_{\mathbf{D}}~~~~\left|\mathbf{G}^{T} \mathbf{D}^{T} \mathbf{B} \mathbf{D} \mathbf{G}\right|  \nonumber\\
&\text{s. t.}~~~~~~~r_{i}\in \mathcal{D}_{r,i},\\ \nonumber
&~~~~~~~~~~~h_{i}\in \mathcal{D}_{h,i},~~~i=1,\cdots,N.
\end{align}
Here, $\mathbf{D}\triangleq\mathrm{diag}\left(\begin{bmatrix} \frac{r_{1}}{d_{1}^2} & \frac{r_{2}}{d_{2}^2}& \cdots & \frac{r_{N}}{d_{N}^2} \end{bmatrix}\right)$, where $d_{i}=\sqrt{r_{i}^2+h_{i}^2},~\forall i$.

Since $\mathbf{B}$ is a semi-positive definite matrix, shown in Eq. (\ref{psidef}), it can be decomposed by SVD as
\begin{align}
\mathbf{B} = \mathbf{U}_{B} \bm{\Lambda}_{B} \mathbf{U}_{B}^T,
\end{align}
where $\mathbf{U}_{B}$ is an orthogonal matrix and $\bm{\Lambda_{B}}$ is a diagonal matrix  with $\mathbf{\Lambda_{B}}=\mathrm{diag}\left(\begin{bmatrix} \psi_{1} & \psi_{2} &\cdots & \psi_{N} \end{bmatrix}\right)$, $\psi_{i}\geq0,~\forall i$.

Let $\mathbf{S}=\mathbf{D}\mathbf{B}\mathbf{D}$. Since $\mathbf{D}$ is a diagonal matrix with only positive diagonal elements,   $\mathbf{S}$ is a semi-positive definite matrix, which can be decomposed by SVD as
\begin{align}
\mathbf{S} = \mathbf{U}_{S} \bm{\Lambda}_{S} \mathbf{U}_{S}^T,
\end{align}
where $\mathbf{U}_{S}$ is the orthogonal matrix and $\Lambda_{S}$ is a diagonal matrix. It is straightforward that
\begin{align}
&\mathbf{U}_{S}=\mathbf{U}_{B}, \\ \nonumber
&\bm{\Lambda}_{S}=\mathbf{D}\bm{\Lambda}_{B}\mathbf{D}.
\end{align}

Based on the above, the objective function of problem $\mathrm{P^{'}}$ can be expressed as
\begin{align}\label{objopddri}
\left|\mathbf{G}^{T}\mathbf{D}^{T}\mathbf{B}\mathbf{D}\mathbf{G}\right|
&=\left|(\mathbf{G}^{T}\mathbf{U}_{B})(\mathbf{D}\bm{\Lambda}_{B}\mathbf{D})(\mathbf{U}_{B}^{T}\mathbf{G})\right|\\\nonumber
&=\left|\mathbf{G}^{T}\mathbf{U}_{B}\right|^{2}\left|\mathbf{D}\bm{\Lambda}_{B}\mathbf{D}\right| \\\nonumber
&=\left|\mathbf{G}^{T}\mathbf{U}_{B}\right|^{2}\left(\prod_{i=1}^{N}\psi_{i}\frac{r_{i}^{2}}{d_{i}^4}\right).
\end{align}
Since $\psi_{i}\geq0,~\forall i$, the solution to problem $\mathrm{P^{'}}$ is given by
\begin{align}
\{r_{i}^{*},h_{i}^{*}\}=\mathop{\arg\max}\limits_{r_{i}\in \mathcal{D}_{r,i},h_{i}\in \mathcal{D}_{h,i}} \frac{r_{i}}{h_{i}^2+r_{i}^2},~\forall i.
\end{align}

Eq. (\ref{objopddri}) demonstrates that the optimization of horizontal angle and the optimization of distance are independent, validating that  Algorithm \ref{alg_1} is applicable and effective in these practical scenarios.

\ifCLASSOPTIONcaptionsoff
  \newpage
\fi

\bibliographystyle{IEEEtran}
\bibliography{IEEEfull,cite}

\end{document}